\newtheorem{theorem}{Theorem}
\newtheorem{proposition}{Proposition}
\newtheorem{lemma}[theorem]{Lemma}
\newtheorem{corollary}[theorem]{Corollary}
\newtheorem{definition}[theorem]{Definition}
\newcommand{\thestate}{\Psi^\epsilon}
\definecolor{christian}{rgb}{.8,.3,0}
\definecolor{ingo}{rgb}{.9,.2,0}
\definecolor{henrik}{rgb}{1,0,.5}
\definecolor{cadmiumgreen}{rgb}{0.0, 0.42, 0.24}
\newcommand{\mc}[1]{\mathcal{#1}}
\newcommand{\mb}[1]{\mathbb{#1}}
\newcommand{\Var}{\mathrm{Var}}
\newcommand{\e}{\mathrm{e}}
\newcommand{\ii}{\mathrm{i}}
\newcommand{\rmd}{\mathrm{d}}
\newcommand{\tr}{\mathrm{Tr}} 
\newcommand{\Tr}{\mathrm{Tr}} 
\newcommand{\id}{\mb{1}}
\newcommand{\one}{\mathbbm{1}}
\newcommand{\1}{\mathrm{id}}
\newcommand{\CC}{\mb{C}}
\renewcommand{\1}{\id}
\newcommand{\norm}[1]{\left\Vert #1 \right\Vert}
\newcommand{\ket}[1]{\left.\left|{#1}\right.\right\rangle}
\newcommand{\bra}[1]{\left.\left\langle{#1}\right.\right|}
\newcommand{\braket}[2]{\left\langle #1 \middle| #2 \right\rangle}
\newcommand{\ketbra}[2]{\ket{#1} \!\! \bra{#2}}
  \newcommand{\proj}[1]{\ketbra{#1}{#1}}
\newcommand{\dd}{\rmd}
\newcommand{\ethz}{Institute for Theoretical Physics, ETH Zurich, 8093 Zurich, Switzerland}
\begin{document}
\title{Entanglement-ergodic quantum systems equilibrate exponentially well}

\author{H.\ Wilming}
\affiliation{\ethz}
\author{M.\ Goihl}
\affiliation{Dahlem Center for Complex Quantum Systems, Freie Universit{\"a}t Berlin, 14195 Berlin, Germany}
\author{I.\ Roth}
\affiliation{Dahlem Center for Complex Quantum Systems, Freie Universit{\"a}t Berlin, 14195 Berlin, Germany}
\author{J.\ Eisert}
\affiliation{Dahlem Center for Complex Quantum Systems, Freie Universit{\"a}t Berlin, 14195 Berlin, Germany}
\begin{abstract}
One of the outstanding problems in non-equilibrium physics is to precisely 
understand when and how physically relevant observables in many-body systems equilibrate 
under unitary time evolution. 
General equilibration results show that equilibration is generic provided that the initial state has overlap with sufficiently many energy levels. 
But results not referring to typicality which show that natural initial states actually 
fulfill this condition are lacking. In this work, we present stringent 
results for equilibration for systems in which R\'enyi entanglement entropies in energy eigenstates with finite energy density are extensive for at least some, not necessarily connected, sub-system. 
Our results reverse the logic of common arguments, in that we derive equilibration from 
a weak condition akin to the eigenstate thermalization hypothesis, which is usually attributed to thermalization in systems that are assumed to equilibrate in the first place. 
We put the findings into the context of studies of many-body localization and many-body scars. 
\end{abstract}
\maketitle

Over recent years the study of the relaxation to equilibrium of complex many-body systems has attracted great attention. 
This interest can be motivated from at least two points of view. 
From a foundational viewpoint, it is desirable to understand how statistical equilibrium ensembles emerge within the framework of unitary quantum mechanics -- without introducing any external probability measures.  
It is then necessary to first explain how systems undergoing unitary evolution attain equilibrium at all.  
A key ingredient to explain this behavior has been found to be the dynamical build-up of entanglement from low-entangled initial states and results showing equilibration under quite general conditions
have been derived
\cite{Tasaki1998,Reimann2008,Linden2009,Goldstein2010,Short2011,Reimann2012,Reimann2012a,Short2012,Masanes2013,Gogolin2016,EisFriGog15,PhysRevLett.118.190601}. The increase of entanglement over time is a generic feature of complex quantum systems and leads to an increase of the entropy of subsystems over time reminiscent to the second law of thermodynamics. 

From a more concrete perspective, the recent interest in the study of
non-equilibrium dynamics is motivated by the fact that such dynamics can now be
realized in well-controlled experiments, for example in ion traps or optical lattices \cite{Bloch2008,Bloch2012,Schneider2012,Trotzky2012,Braun2015,Schreiber2015}. 
Moreover, the discovery of many-body localized systems \cite{Nandkishore2015},
which equilibrate \cite{Serbyn2014} but fail 
to thermalize \cite{Schreiber2015}, shows that there 
remains much to be understood about the equilibration 
behavior of complex quantum systems. 
Despite the great progress in understanding the equilibration behavior of many-body systems, rigorous results showing that systems with natural initial states equilibrate to high precision based on concrete physical properties have been lacking. 
 
In this article we aim to fill this gap, by taking a new perspective to the problem. 
To do this, we carefully reconsider the entanglement content of energy eigenstates in complex, interacting many-body systems and devise a working-definition of ``entanglement-ergodic'' systems whose energy eigenstates at finite energy density have a sufficient amount of entanglement between suitable subsystems. The condition we propose is very weak --  yet we show that generically such systems equilibrate to exponential precision in the volume of the system if the initial state is given by a product state with finite, non-zero energy density.
Commonly, one assumes equilibration and invokes the \emph{eigenstate thermalization hypothesis} (ETH) \cite{Deutsch1991,Srednicki1994,Rigol2008,Polkovnikov2011,Rigol2016}
to make thermalization plausible. Here, we stringently derive equilibration from a highly plausible
condition similar to, but we believe much weaker than the eigenstate thermalization hypothesis.

The main ingredient of our proof is a careful discussion of R\'enyi entanglement entropies in energy eigenstates with finite energy density. 
Combining this insight with the strongly peaked energy distribution of weakly correlated states and the monotonicity of R\'enyi entropies allows us to prove that experimentally accessible initial states are well smeared out over the energy spectrum, which implies high-precision equilibration for generic interacting Hamiltonians. 

\paragraph{Formal setting.}
We consider local Hamiltonians
\begin{equation}
	H_\Lambda=\sum_{x\in\Lambda} h_x
\end{equation}
on a regular lattice $\Lambda$ in $\nu$ spatial dimensions with $N:=  |\Lambda|$ lattice sites.
The Hilbert space is $\otimes_{x\in\Lambda} \mc H_x$ with $\operatorname{dim}\mc H_x=d$.
Since we will be talking about the scaling of quantities with the lattice size, $H_\Lambda$ should be seen as a sequence of Hamiltonians, which is, for example, given by a family of translational invariant system on larger and larger square lattices with periodic boundary conditions. 
We call $H_\Lambda$ $l$-local if the diameter of the support of each $h_x$ is at most $l$,  strictly local if it is $l$-local with $l$ independent of the system size $N$ and uniformly bounded if $\norm{h_x}\leq h$ for all $x\in\Lambda$ independent of the system size.
Since we are mostly interested in energy densities instead of total energies
later, we make the unusual choice to label eigenvectors of the Hamiltonian by their energy 
\emph{densities} $e_i$ as $\ket{e_i}_\Lambda= \ket{E_i/N}_\Lambda$, with $i=1,\ldots,d^N$ and $E_i$ being the eigenvalue of $H_\Lambda$ corresponding to $\ket{e_i}_\Lambda$.
We always assume that the ground-state has vanishing total energy.
From now on, we will often drop the subscript $\Lambda$ from states and Hamiltonians to simplify the notation.

\emph{Equilibration in closed systems.}
We now briefly review general equilibration results that we will be using in the following.
Consider any bounded observable $A$ and any initial state $\rho$.  
Denote the time evolved states by $\rho(t)$. We want to study whether the expectation value $\langle A(t)\rangle_\rho = \tr(\rho(t)A)$ equilibrates over time.  
In a finite system, perfect equilibration, in the sense that $\langle A(t)\rangle$ becomes static for all times after the equilibration process, is impossible due to recurrences. 
However, it is perfectly possible that this value is very close to a stationary value $\overline{A}$ for most of the time, with rare deviations. 
The value $\overline{A}$ is then necessarily 
the 
infinite time average
\begin{align}
\overline{A} :=  \overline{\langle A(t)\rangle} = \lim_{T\to \infty}\frac{1}{T}\int_0^T \tr(\rho(t)A) = \tr(\omega A),
\end{align}
where $\omega$ denotes the time average $\overline{\rho(t)}$, which is again a valid density matrix. 
In the case where the Hamiltonian has no degenerate \emph{energy differences} $G_{(i,j)}=E_i-E_j$, it has
been shown that the time-averaged fluctuations around the equilibrium value are bounded by \cite{Short2011} 
\begin{align}\label{eq:eqonaverage}
\Var(A,H,\rho):=  \overline{(\langle A(t)\rangle - \overline{A})^2} \leq \norm{A}^2 \e^{-S_2(\omega)}, 
\end{align}
where $S_\alpha$ denotes the R\'enyi-$\alpha$ entropy
\begin{align}
S_\alpha(\rho) :=  \frac{1}{1-\alpha}\log\left(\tr(\rho^\alpha)\right).
\end{align}
We note for later use that in the limit $\alpha \rightarrow 1$ the von-Neumann entropy is
recovered and that R\'enyi entropies are monotonically decreasing in $\alpha$.
The condition of non-degenerate energy differences is generically fulfilled in interacting systems \cite{symmetryNote}.
However, generalizations of the above result also exist if this condition is weakly violated \cite{Short2012}.
It is also possible to show $\Var(A,H,\rho)\leq 3\norm{A}^2 \e^{-S_\infty(\omega')}$, where $\omega'$ is the same operator as $\omega$ but with its largest eigenvalue replaced by zero \cite{Reimann2008}, which sometimes gives a stronger bound. 
In particular, it also incorporates the case of energy eigenstates, which are always fully equilibrated.
It is important to stress, however, that the bound Eq.\ \eqref{eq:eqonaverage} does not lead to implications on the time it takes to observe equilibration.

Similarly to results in terms of bounded observables, one can also bound the distance of a local reduced state on a subsystem $S$ from its time average as \cite{Linden2009}
\begin{align}\label{eq:eqsubsystem}
	 \overline{\mc \| \tr_{S^c}(\rho(t)) - \tr_{S^c}(\omega)\|_1 } \leq 2 d_S \e^{-S_2(\omega)/2}.
\end{align}  
Here, $\|.\|_1$ denotes the trace norm, which bounds the difference in expectation value of all normalized
observables as 
\begin{align}
	\|\rho-\sigma\|_1=\max_{A,\norm{A}=1} |\tr(\rho A) - \tr(\sigma A)|.
\end{align}
Roughly speaking, the role of the norm of the observables in the previous bound is here taken by the dimension of the subsystem.  
We conclude that a large R\'enyi-2 entropy of the time-averaged state is a sufficient condition for a generic, closed quantum system to equilibrate eventually.
Unfortunately, however, there are few general and rigorous results which show
that natural initial states lead to time-averaged states whose energy
distribution has a large R\'enyi-2 entropy \cite{Farrelly2016,Gallego2017}.
Ref.~\cite{Gallego2017} makes it highly plausible from an operational
point of view that states that can be prepared in experiments have large
effective dimension, but does not show it for concrete states. 
Ref.~\cite{Farrelly2016} shows that states with finite correlation length have R\'enyi-2 entropy of at least roughly the order $\log(N)$. 
While this formally leads to equilibration as $N\rightarrow \infty$, 
it is insufficient to obtain a finite entropy density at equilibrium (which is crucial from a thermodynamic point of view) and requires very large system sizes to explain equilibration.
In the following, we show that in systems which have a sufficiently large
amount of entanglement in energy eigenstates with finite energy density, finite
entropy density and hence exponentially good equilibration in the system size follows for initial product states.

\emph{Entanglement and R\'enyi entropies.}
Here, we are interested in ergodic, non-integrable systems. 
Contrary to the case of classical mechanics, there is no generally agreed upon definition of what it exactly means for a quantum system to be ergodic or non-integrable \cite{Caux2010}. 
In recent years, it has been argued that a general characteristic of systems which can be considered ergodic is that energy eigenstates fulfill a so-called \emph{volume law} in terms of their entanglement content. 
The condition we propose is inspired from this observation, but is much weaker. 
It is therefore useful to discuss volume laws before stating our condition.
Consider an energy eigenvector with finite energy density $\ket{e}$ and denote by $\rho_A(e)$ the reduced density matrix on some (contiguous) subsystem $A$ which is smaller than one half of the total system, but still contains a finite fraction of the total system.  A \emph{volume law} means that the entanglement measured by a R\'enyi entropy $S_\alpha$ grows like the volume of $A$ as
\begin{align}
S_\alpha(\rho_A(e)) \sim |A|. 
\end{align}   
A natural question to ask is for which value of $\alpha$ this relation is supposed to hold. 
We now argue that this relation is a meaningful criterion only if $\alpha>1$.
This might come as a surprise since it is common to measure entanglement in terms of the von-Neumann entropy $S_1$. 
This is due to the fact that the von-Neumann entropy describes the fraction of EPR-pairs that can be distilled from asymptotically many copies of a state by local operations and classical communication \cite{PhysRevA.53.2046}. 
However, the examples presented in the following proposition show that a volume law in terms of von-Neumann entropy is not a very useful criterion to determine whether a state of a many-body system deviates strongly from an unentangled state.

\begin{proposition}[Counterexample]\label{prop:example}
	For any $1>\epsilon>0$ there exist state vectors $\ket{\thestate}$ on $\Lambda$ with the following properties:
	\begin{enumerate}
		\item\label{prop:product} $\ket{\thestate}$ has overlap exponentially close to $1-\epsilon$ in 
		$N$ with a product state vector $\ket{\Psi}$.
	\item $\ket{\thestate}$ fulfills a \emph{volume law} in the von~Neumann entropy: 
	There exist regions $A$ with $|A|=N/2$ such that
			$S_1(\rho_A) \simeq \frac{\epsilon}{2} \log(d) N$.
	\item\label{prop:renyi} All R\'enyi entropies with $\alpha>1$ are bounded by a constant in the system size,
		$S_\alpha(\rho_A) \leq \mathrm{const.}$ 
	\end{enumerate}
\end{proposition}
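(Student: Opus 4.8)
\emph{Proof proposal.} The plan is to exhibit a single explicit state whose Schmidt spectrum across a fixed cut is sharply peaked on one value yet carries an exponentially long, exponentially flat tail: this is exactly the profile that makes the von~Neumann entropy extensive while keeping every $S_\alpha$ with $\alpha>1$ bounded. Fix a bipartition $\Lambda=A\cup A^c$ with $|A|=N/2$, so that each factor has dimension $D:=d^{N/2}$, pick a product basis $\{\ket{i}\}$ on each side, let $\ket{\Psi}=\ket{0}_A\ket{0}_{A^c}$, and set
\begin{align}
\ket{\thestate}=\sqrt{1-\epsilon}\,\ket{0}_A\ket{0}_{A^c}+\sqrt{\tfrac{\epsilon}{D-1}}\sum_{i=1}^{D-1}\ket{i}_A\ket{i}_{A^c}.
\end{align}
This is a superposition, with weights $1-\epsilon$ and $\epsilon$, of the product vector $\ket{\Psi}$ and a state maximally entangled on the orthogonal complement of the product directions.

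Next I would read off the reduced state. The displayed expression is already a Schmidt decomposition across $A|A^c$, so $\rho_A$ has eigenvalue $1-\epsilon$ with multiplicity one and eigenvalue $\epsilon/(D-1)$ with multiplicity $D-1$. Since $\braket{\Psi}{\thestate}=\sqrt{1-\epsilon}$ exactly, property~\ref{prop:product} holds with squared overlap $1-\epsilon$; if instead the entangled component is taken generic rather than exactly orthogonal to $\ket{\Psi}$, the cross term is exponentially small in $N$, which accounts for the ``exponentially close'' wording.

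It then remains to evaluate the entropies on this spectrum. For the von~Neumann entropy,
\begin{align}
S_1(\rho_A)=-(1-\epsilon)\log(1-\epsilon)-\epsilon\log\epsilon+\epsilon\log(D-1),
\end{align}
and since $\log(D-1)=\tfrac{N}{2}\log d+\landauO(1)$ the extensive term yields $S_1(\rho_A)\simeq\tfrac{\epsilon}{2}\log(d)\,N$, establishing the second property. For $\alpha>1$,
\begin{align}
\tr(\rho_A^\alpha)=(1-\epsilon)^\alpha+\frac{\epsilon^\alpha}{(D-1)^{\alpha-1}}\xrightarrow[N\to\infty]{}(1-\epsilon)^\alpha,
\end{align}
so $S_\alpha(\rho_A)\to\frac{\alpha}{1-\alpha}\log(1-\epsilon)$, a constant independent of $N$, giving property~\ref{prop:renyi}.

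The construction has no serious obstacle; the one point to appreciate is why a single state displays both behaviours at once. The dominant Schmidt weight $1-\epsilon$ controls $\tr(\rho_A^\alpha)$ for every $\alpha>1$, the tail being suppressed by the factor $(D-1)^{1-\alpha}$, whereas the very same tail of $D-1$ eigenvalues of size $\sim\epsilon/D$ supplies the extensive $\epsilon\log D$ contribution to $S_1$. That the limiting constant $\frac{\alpha}{1-\alpha}\log(1-\epsilon)$ diverges as $\alpha\to1^+$ is not a flaw but the mechanism itself: it is precisely the crossover from bounded to extensive entropy at $\alpha=1$.
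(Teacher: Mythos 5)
Your proof is correct, and it rests on the same basic idea as the paper's — superposing a product vector with weight $1-\epsilon$ and a maximally entangled vector with weight $\epsilon$ — but your execution is genuinely cleaner. The paper takes $\ket{\Omega}$ maximally entangled on the \emph{full} space, so $\braket{\Psi}{\Omega}\neq 0$; it must then invoke its overlap lemma (Lemma~\ref{lemma:overlap}) to show that the cross terms and the normalization defect are exponentially small, and it must further argue that the resulting perturbation of the reduced state touches only two eigenvalues before it can read off the spectrum of $\xi_A=(1-\epsilon)\proj{\Psi_A}+\epsilon\,\one/d_A$. By instead supporting the entangled component on the orthogonal complement of the product directions, you obtain an exact Schmidt decomposition: the state is exactly normalized, the squared overlap with $\ket{\Psi}$ is exactly $1-\epsilon$ (trivially ``exponentially close''), and the reduced spectrum $\{1-\epsilon\}\cup\{\epsilon/(D-1)\text{ with multiplicity }D-1\}$ is exact, so both entropy computations are elementary and require no perturbative bookkeeping. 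What the paper's version buys in exchange is robustness and reuse: the overlap lemma it leans on is a standalone tool that reappears in the proof of Theorem~\ref{thm:main}, and its argument shows that any product-plus-maximally-entangled superposition exhibits the phenomenon, not only the orthogonalized one. One small polish for your write-up: for property~\ref{prop:renyi} you should state the bound uniformly in $N$ rather than only in the limit — since $\tr(\rho_A^\alpha)\geq(1-\epsilon)^\alpha$ and $1/(1-\alpha)<0$ for $\alpha>1$, you get $S_\alpha(\rho_A)\leq\frac{\alpha}{\alpha-1}\log\frac{1}{1-\epsilon}$ for every $N$, which is precisely the constant the paper obtains.
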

The state vectors that fulfill these condition are simply of the form 
	$\ket{\Psi^\epsilon}\propto \sqrt{1-\epsilon}\ket{\Psi} + \sqrt{\epsilon} \ket{\Omega}$,
where $\ket{\Omega}$ is maximally entangled between $A$ and its complement. 
The proof of these properties is given in \ref*{app:example} in the Supplemental Material. 
Given Properties~\ref{prop:product} and \ref{prop:renyi}, it cannot reasonably be said that the amount of entanglement in $\ket{\Psi^\epsilon}$ grows volume-like -- even though it fulfills a volume law in terms of von-Neumann entropy.
At the same time it suggests that we should require a volume law in terms of some R\'enyi entropy with $\alpha>1$. 
Importantly, the inequality (proven in the Supplemental Material, \ref*{app:prooflemma})
\begin{align}
		S_\alpha(\rho) \geq S_\infty(\rho) \geq \frac{\beta-1}{\beta} S_{\beta}(\rho),\quad \forall \alpha\geq 0,\label{eq:relation}
\end{align}
which holds for any $\beta>1$, shows that all the R\'enyi entropies with $\alpha>1$ have the same scaling behaviour. It therefore does not matter which one we consider and in the following we therefore mostly consider the case $\alpha=2$.

Recently there has been an increasing amount of numerical results and theoretical arguments that show that energy eigenstates of generic non-integrable quantum systems with finite energy density have this property (see, for example, Refs.~\cite{Garrison2015,Nakagawa2018,Lu2017,Huang2017,Vidmar2017,Vidmar2017a,PhysRevB.92.180202}).
These results fit well to, and indeed are partly motivated by, the observation that certain properties of complex, strongly interacting systems at finite energy density can be well described by assuming that their Hamiltonians are \emph{random matrices} \cite{Berry1985,Prosen1999,Mueller2004,Kollath2010,Keating2015,Kos2017}, despite the Hamiltonians being local {and thus belonging to a set of measure zero. 
It is well known that Haar random quantum states have extensive R\'enyi entanglement entropy on bipartite systems with very high probability \cite{Lubkin1978,Lloyd1988,Page1993,Nadal2010,Nadal2011}. 
Since eigenstates of random Hamiltonians are distributed according to the Haar-measure, we expect that eigenstates of random Hamiltonians fulfill a volume law in terms of all R\'enyi entropies with very high probability \cite{HaarNote}. 

\emph{Entanglement-ergodicity.} In essence, in the following we will show that an extensive amount of entanglement entropy in energy eigenstates for R\'enyi-entropies with $\alpha>1$ is a sufficient criterion for exponentially good equilibration. 
Importantly, however, we will significantly weaken the assumption from volume laws -- where the subsystem under consideration is assumed be a contiguous region -- by allowing the subsystem to be almost completely arbitrary. 
The only property that we demand from the subsystem is that it includes a finite fraction of the total system. 
In particular, it need not be connected and further may be chosen differently for every energy eigenstate. 
For example, it may have a fractal-like shape (up to the lattice spacing), or consist of a sub-lattice of spins that are far away from each other when compared to some natural length scale of the system.
In the following, we call this (much weaker) form of a volume law a \emph{weak volume law}.
In particular, even states that have a finite correlation length, such as \emph{matrix product states (MPS)}, can be expected to generically follow a weak volume law -- unless they are product states themselves. 

We will now give a formal definition of what we demand from an entanglement-ergodic system. Notably, this definition is perfectly compatible with the mindset of the ETH that argues basically that eigenstates of local Hamiltonians in the bulk of the spectrum should be locally indistinguishable from Gibbs states due to their entanglement.  

\begin{definition}[Entanglement-ergodicity]
We call a sequence of systems of increasing system size \emph{entanglement-ergodic}, if there exists a system size $N_0$ and a function $g:\mathbb{R}\rightarrow [0,\infty)$, 
such that for all system sizes $N\geq N_0$ it holds that	
\begin{enumerate} 
	\item (Weak volume law) For every eigenvector $\ket{e_i}_\Lambda$ there exists \emph{some} subsystem
$A_\Lambda$ such that the reduced state
$\rho_{A_\Lambda}(e_i):= \tr_{A_\Lambda^c}(\proj{e_i}_\Lambda)$ fulfills
\begin{align}
	S_{2}(\rho_{A_\Lambda}(e_i)) \geq g(e_i) N.
\end{align}
\item The function $g$ is sufficiently well-behaved: It is Lip\-schitz continuous and positive for non-extremal energy densities, i.e., $g(e)>0$ for $ 0<e<e_\mathrm{max}$.
\end{enumerate}
\end{definition}
Some remarks on this definition are in order: 
i) As shown by \eqref{eq:relation}, we could have replaced the R\'enyi-2 entropy with any R\'enyi entropy with $\alpha>1$ and would have obtained an equivalent definition. 
ii) We only require that there is a \emph{lower bound} on the entropy in the region $A_\Lambda$ as a function of the energy density and not that that the entropy is given by the function $e\mapsto g(e)$. 
In particular, we do not require that eigenstates with the same energy density also have the same R\'enyi entanglement entropy and allow for the possibility that states with vanishing energy density do not have a large amount of entanglement. 
For example, the ground state of the system may be a product state. 
Similarly, we impose Lip\-schitz continuity of the function $g$ for simplicity and concreteness. 
Similar conclusions as in the following can be reached by imposing different regularity assumptions on $g$.
iii) It would be perfectly fine for all what follows to allow for additional negative but sub-leading terms, e.g.\ in $O(\sqrt{N})$. 
These terms would only change the sub-leading behavior of our final results and may encode non-asymptotic information about the entropy in eigenstates. 
For simplicity, we omit such terms here. 

iv) We emphasize again that we only demand a weak volume law, i.e., that for any system size and any eigenstate there exists \emph{some} finite fraction of the total system $A_\Lambda$ whose entropy is sufficiently large. It is not required that this holds for all such subsystems nor is it required that this subsystem has any particular shape. 
For example, it could consist of every 10-th site of the lattice.
Nevertheless, in generic, strongly interacting systems, we expect that choosing $A_\Lambda$ simply as one half of the system is sufficient.

v) Importantly, the weak assumption on the subsystem $A_\Lambda$ allows even MPS-like eigenstates, which fulfill an \emph{area law} \cite{AreaReview} for \emph{contiguous} regions, to be entanglement-ergodic. As a concrete example, we prove the following statement in the Supplemental Material, \ref*{app:finitecorrelationlength}:\emph{
	States that are prepared by a translationally invariant, finite-depth, local quantum circuit and are not a product state have extensive in R\'enyi-$2$ entropy on a finite fraction of the system.}
Similarly, we expect an analogous result to hold more generally for generic MPS. A detailed discussion on the application of our framework to generic MPS is provided in \ref*{app:finitecorrelationlength} of the Supplemental Material.  
These results show that also systems featuring \emph{many-body localization} \cite{Nandkishore2015}, whose eigenstates are expected to be approximable by matrix product states \cite{Bauer,1409.1252}, may fall within the framework of entanglement ergodicity. 
It is well known that such systems equilibrate, but fail to also thermalize. 
Thus even if we refer to the key property as entanglement-ergodicity, it is a significantly weaker condition
than what is commonly understood as ergodicity in current literature.

vi) Similarly, we expect the notion of entaglement-ergodicity to be weaker than the ETH: if one assumes that the ETH applies to some subsystems containing a finite fraction of the total system, then we strongly believe that the ETH implies entanglement-ergodicity and hence equilibration with high precision (see Supplemental Material, \ref*{app:eth}).

vii) With stronger assumptions on the regions $A_\Lambda$, we can extend the applicability of the definition of entanglement ergodicity to states related by quasi-local unitaries (those generated by time-evolution under local Hamiltonians): \emph{For systems in which the regions $A_\Lambda$ have an asymptotically vanishing surface-to-volume ratio, entanglement-ergodicity is stable under quasi-local unitaries.}
The precise meaning and formal proof of this statement is formulated 
in \ref*{app:quasilocal} of the Supplements.
Intuitively it follows by observing that time evolution under a local Hamiltonian for a finite time can only decrease the entropy of a sub-region by an amount that is proportional to its boundary, since the entropy has to  ``flow'' 
out of the subregion through its boundary. 

\emph{Consequences of entanglement-ergodicity.}
Let us now discuss the consequences of entanglement-ergodicity. We first state a result on the 
diagonal entropy in entanglement-ergodic systems, to then turn to the implication that
entanglement-ergodic systems equilibrate exponentially well.

\begin{theorem}[Diagonal entropy in entanglement-ergodic systems]\label{thm:main}
	Consider an entanglement-ergodic system with strictly local, uniformly bounded Hamiltonian. Then for any energy density $e>0$ there exists a constant $k(e)>0$ and a system-size $N_0(e)$ such that for all system-sizes $N>N_0(e)$ and for all product-states $\ket{\Psi}_\Lambda$ with energy density $e$, we have 
\begin{eqnarray}\label{eq:main:quasi}
	S_\alpha(\omega_\Lambda) \geq k(e) N,
\end{eqnarray}
	where $\omega_\Lambda$ is the time-average of $\proj{\Psi}_\Lambda$.
\end{theorem}
As a direct consequence of this result we obtain from \eqref{eq:eqonaverage} the following bounds on equilibration.
\begin{corollary}[Equilibration in entanglement-ergodic systems]
Under the same conditions as in Theorem~\ref{thm:main} and the additional assumption of non-degenerate 
	energy gaps of the Hamiltonian, there exists a constant $k(e)>0$ such that
\begin{align}
	\Var\left(A,H_\Lambda,\proj{ \Psi}_\Lambda\right) \leq \norm{A}^2 \e^{-k(e) N}.
\end{align}
\end{corollary}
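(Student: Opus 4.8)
The plan is to obtain the corollary as an immediate combination of the general equilibration bound \eqref{eq:eqonaverage} with the extensive entropy lower bound supplied by Theorem~\ref{thm:main}. The entire quantitative content already resides in the theorem, so the only work here is to connect the two statements under the stated hypotheses and to keep track of the constants; indeed this is why the result is phrased as a corollary rather than a theorem. First I would check that all hypotheses of \eqref{eq:eqonaverage} are satisfied. The initial state is the pure product state $\proj{\Psi}_\Lambda$, which is a valid density matrix, and $A$ is an arbitrary bounded observable, contributing the prefactor $\norm{A}^2$. The remaining requirement of \eqref{eq:eqonaverage} is that the Hamiltonian have no degenerate energy differences $G_{(i,j)}=E_i-E_j$, which is precisely the additional non-degeneracy assumption imposed in the corollary. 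Hence \eqref{eq:eqonaverage} applies and yields $\Var(A,H_\Lambda,\proj{\Psi}_\Lambda)\leq \norm{A}^2\,\e^{-S_2(\omega_\Lambda)}$.

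Next I would verify that the $S_2(\omega_\Lambda)$ appearing in this bound is exactly the quantity controlled by Theorem~\ref{thm:main}. Under non-degenerate energy gaps the spectrum itself is non-degenerate, so $E_i=E_j$ forces $i=j$; consequently the infinite-time average dephases every off-diagonal coherence in the energy eigenbasis and the time-averaged state collapses to the diagonal ensemble $\omega_\Lambda=\sum_i |\braket{e_i}{\Psi}|^2\,\proj{e_i}_\Lambda$. Its R\'enyi-$2$ entropy is therefore the diagonal entropy of the energy distribution of $\ket{\Psi}_\Lambda$, which is precisely the object for which Theorem~\ref{thm:main} provides an extensive lower bound. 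This identification is what licenses substituting the theorem's bound into the variance estimate.

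Finally I would instantiate Theorem~\ref{thm:main} at $\alpha=2$, which gives $S_2(\omega_\Lambda)\geq k(e)N-\mathrm{const.}$ with $k(e)>0$, and insert this into the inequality from \eqref{eq:eqonaverage}:
\begin{align}
\Var\left(A,H_\Lambda,\proj{\Psi}_\Lambda\right)\leq \norm{A}^2\,\e^{-(k(e)N-\mathrm{const.})}=\norm{A}^2\,\e^{\mathrm{const.}}\,\e^{-k(e)N},
\end{align}
so that setting $C:=\e^{\mathrm{const.}}$ reproduces the claimed bound with the same constant $k(e)$. There is no substantive obstacle in this derivation: the only points demanding care are the reduction of $\omega_\Lambda$ to the diagonal ensemble under the gap non-degeneracy assumption, and the bookkeeping that absorbs the additive constant of the theorem into the multiplicative prefactor $C$. (One could equally route through the $S_\infty$-based variant mentioned after \eqref{eq:eqonaverage} together with the inequality \eqref{eq:relation}, but the $\alpha=2$ bound is the most direct and gives the stated form immediately.)
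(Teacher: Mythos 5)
Your proposal is correct and is exactly the paper's argument: the paper states the corollary as a ``direct consequence'' of Theorem~\ref{thm:main} combined with the bound \eqref{eq:eqonaverage}, which is precisely your route of instantiating the theorem at $\alpha=2$ for the diagonal ensemble $\omega_\Lambda$ (justified by the non-degenerate-gaps assumption) and absorbing the additive constant into $C=\e^{\mathrm{const.}}$. Your verification of the hypotheses and the dephasing argument merely makes explicit what the paper leaves implicit, so there is nothing to correct.
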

Similar bounds hold for the reduced state on a small sub-system, as implied by \eqref{eq:eqsubsystem}. 
The proof of Theorem~\ref{thm:main} is given in \ref*{app:prooftheorem} of the Supplemental Material.
It relies on recognizing that an extensive amount of R\'enyi entanglement entropy implies that a state has exponentially small overlap with all product states 
and to combine this statement with a recent central limit-type theorem for the energy distribution in product states \cite{Anshu2016}.
The premises of Theorem~\ref{thm:main} require the initial state to be a product states. In the Supplements (\ref*{app:state_quasilocal}), we further extend our results to the case where the initial state is prepared from a product state by a quasi-local unitary under the additional assumption that the regions $A_\Lambda$ have vanishing surface-to-volume ratio. 
We obtain equilibration bounds scaling as $O(\exp(-N^\beta))$ for some $0<\beta<1$.

\emph{Quantum many-body scars.} Recently, it has been observed that sets of atypical energy eigenstates with small amount of entanglement may rarely show up even at finite energy density in non-integrable, kinetically constrained many-body systems -- a phenomenon dubbed \emph{``quantum many-body scars''}, 
which leads to exceedingly slow equilibration with long-lived oscillations from certain initial product states \cite{Bernien2017,Turner2018,Moudgalya2018,Turner2018a,Moudgalya2018a}.  
Even a toy-model with complete absence of equilibration for certain initial product states has been constructed \cite{Choi2018}. 
This complete absence of equilibration can be explained due to a ${\mathrm{SU}}(2)$ symmetry in the subspace of many-body scars \cite{Choi2018}, which also emerges approximately in more realistic Hamiltonians and breaks the assumption of non-degenerate energy gaps. Furthermore, many-body scars may have large squared overlap with the initial product state (of order $1/N$), which implies that such systems also violate entanglement-ergodicity within the subspace of many-body scars.

\emph{Conclusion.}
We carefully formalized a notion of ergodicity based on extensive R\'enyi entanglement entropies in energy eigenstates and showed that this notion suffices to prove exponentially precise equilibration for Hamiltonians with non-degenerate energy gaps. Our condition is quite weak and we expect it to be fulfilled for generic interacting systems.
The notion of ergodicity we introduced is connected to the eigenstate thermalization hypothesis (ETH), which asserts that local reduced states of energy eigenstates with finite energy density already resemble the reduced state of a corresponding Gibbs state. 
We thereby introduce a new perspective to the study of non-equilibrium quantum systems: We do not
have to assume equilibration for systems to become apparently stationary and then turn to the ETH to show thermalization. 
Instead, we show that equilibration already follows from a weak ETH-like assumption.
In fact, in this case the ETH implies much more, since our definition does in general not imply that the system also \emph{thermalizes} (this apparent shortcoming is necessary when formulating a criterion that may also apply to many-body localized systems). 
Here, we did not discuss the time scales for relaxation to equilibrium, but were interested in the precision of equilibration after arbitrarily long times. While some progress in understanding equilibration time-scales has been made recently, both in integrable~\cite{Calabrese2006,Calabrese2007,Eisler2007,Rigol2007,Cramer2008,Cramer2008a,Barthel2008,Flesch2008,Calabrese2011,Caux2013} and generic, non-integrable systems~\cite{Short2012,Goldstein2013,Malabarba2014,Garcia-Pintos2015a,Goldstein2015,Farrelly2016a,Reimann2016,DeOliveira2017,Wilming2017}, finding rigorous arguments bounding equilibration time scales from reasonable assumptions remains an outstanding open problem.

\paragraph{Acknowledgements.} H.~W. would like to thank Rodrigo Gallego for fruitful discussions regarding the role of R\'enyi entropies in equilibration. We acknowledge financial support from the ERC (TAQ), the DFG 
(FOR2724, CRC 183, EI 519/7-1, EI 519/9-1), the Templeton Foundation, and the Studienstiftung des Deutschen Volkes. H.~W. further acknowledges contributions from the Swiss National Science Foundation via the NCCR QSIT as well as project No. 200020\_165843. This work has also received funding from the European Union's Horizon 2020	research and innovation programme under grant agreement No 817482 (PASQuanS).

%
 \phantom{x}
 \makeatletter
 \newcommand{\manuallabel}[2]{\def\@currentlabel{#2}\label{#1}}
 \makeatother
 \manuallabel{app:prooflemma}{Section~A}
 \manuallabel{app:example}{Section~B}
 \manuallabel{app:finitecorrelationlength}{Section~C}
 \manuallabel{app:prooftheorem}{Section~D}
 \manuallabel{app:quasilocal}{Section~E}
 \manuallabel{app:state_quasilocal}{Section~F}
 \manuallabel{app:eth}{Section~G}
 \clearpage

 \appendix
 \makeatletter
 \onecolumngrid
 \begingroup
 \frontmatter@title@above
 \frontmatter@title@format
 \@title{} -  Supplemental Material
 
 \endgroup
 \vspace{1cm}
 \twocolumngrid
 \makeatother
 \setcounter{page}{1}
\onecolumngrid 
 \subsection{\ref*{app:prooflemma}: Proof of Equation~(12) in main text and a Lemma on overlaps with product states}
 In this section, we provide the derivation of the inequality
 \begin{equation}\label{eq:relation}
 		S_\alpha(\rho) \geq S_\infty(\rho) \geq \frac{\beta-1}{\beta} S_{\beta}(\rho),\quad \forall \alpha\geq 0, \beta > 1.
 \end{equation}
 Furthermore, we derive a key technical lemma that we will use in multiple of the subsequent arguments. 
 \begin{proof}[Proof of Equation~\eqref{eq:relation}]
 Let $\{q_j\}$ be the eigenvalues of $\rho$ and let $q_0$ be the largest one (in case of degeneracies, pick any of the largest ones).
 Since $q_0^{\tilde\alpha} \leq \sum_{j} q_j^{\tilde\alpha}$, we have, for any $\tilde\alpha>1$,
 \begin{align}
 	\frac{\tilde \alpha}{\tilde\alpha-1}S_\infty(\rho) &= - \frac{\tilde \alpha}{\tilde\alpha-1}\log(q_0) = \frac{1}{1-\tilde\alpha}\log(q_0^{\tilde\alpha})\\
 						       &\geq \frac{1}{1-\tilde\alpha}\log\left(\sum_j q_j^{\tilde\alpha}\right) = S_{\tilde\alpha}(\rho).
 \end{align}
 Since the R\'enyi entropies are monotonically decreasing in $\alpha$, this proves the claim. 
 \end{proof}
 
 We now derive a helpful result concerning the overlap of product states with arbitrary pure states. Among others, our proof of Theorem~\ref{thm:main} on the equilibration of entanglement ergodic systems will involke this lemma. 
 
 \begin{lemma}[Overlap of product states with arbitrary pure states]\label{lemma:overlap} 
 Let $\ket{\Phi}$ be any pure state vector on a lattice, $A$ be a fixed subregion and $\sigma_A$ be the corresponding reduced 
 quantum state. 
  Let $\alpha >1$. Then
  \begin{align}
  |\braket{\Phi}{\Psi}|^2 \leq \e^{-\frac{\alpha-1}{\alpha} S_{\alpha}(\sigma_A)} 
  \end{align}
  for any product vector $\ket{\Psi} = \ket{\Psi_A} \otimes \ket{\Psi_{A^c}}$. 
 \end{lemma}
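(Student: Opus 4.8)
The plan is to reduce the overlap of the arbitrary state $\ket{\Phi}$ with the product state $\ket{\Psi}=\ket{\Psi_A}\otimes\ket{\Psi_{A^c}}$ to a quantity that depends on $\ket{\Phi}$ only through its reduced state $\sigma_A$, and then to bound that quantity by the largest eigenvalue of $\sigma_A$. The crucial observation is that a product state factorizes across the cut $A\,|\,A^c$, so the two tensor factors can be handled separately and the freedom contained in $\ket{\Psi_{A^c}}$ can simply be discarded.

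First I would contract $\ket{\Phi}$ with the $A$-factor only, defining the (subnormalized) vector $\ket{\chi}:=(\bra{\Psi_A}\otimes\id)\ket{\Phi}$ on $A^c$. By construction $\braket{\Phi}{\Psi}=\braket{\chi}{\Psi_{A^c}}$, so the Cauchy--Schwarz inequality together with $\braket{\Psi_{A^c}}{\Psi_{A^c}}=1$ yields $|\braket{\Phi}{\Psi}|^2\le\braket{\chi}{\chi}$. A short computation identifies the right-hand side as $\braket{\chi}{\chi}=\bra{\Phi}(\proj{\Psi_A}\otimes\id)\ket{\Phi}=\tr(\proj{\Psi_A}\,\sigma_A)=\bra{\Psi_A}\sigma_A\ket{\Psi_A}$, where $\sigma_A=\tr_{A^c}\proj{\Phi}$. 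Thus the entire dependence on $\ket{\Phi}$ has collapsed onto its marginal $\sigma_A$, and the dependence on the product structure has collapsed onto the single vector $\ket{\Psi_A}$.

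Next, since $\ket{\Psi_A}$ is an arbitrary normalized vector, I would bound the quadratic form by the maximal eigenvalue of $\sigma_A$, that is $\bra{\Psi_A}\sigma_A\ket{\Psi_A}\le\lambda_{\max}(\sigma_A)=\e^{-S_\infty(\sigma_A)}$. Finally, invoking the already-established chain \eqref{eq:relation}, which gives $S_\infty(\sigma_A)\ge\frac{\alpha-1}{\alpha}S_\alpha(\sigma_A)$ for every $\alpha>1$, turns this into $|\braket{\Phi}{\Psi}|^2\le\e^{-S_\infty(\sigma_A)}\le\e^{-\frac{\alpha-1}{\alpha}S_\alpha(\sigma_A)}$, which is exactly the claim.

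I do not expect a genuine obstacle here: the argument is essentially a single application of Cauchy--Schwarz followed by the entropy inequality already proven in this section. The only point requiring care is the bookkeeping of the tensor factors, namely making sure that the contraction with $\ket{\Psi_A}$ leaves a well-defined vector on $A^c$ whose squared norm is precisely $\bra{\Psi_A}\sigma_A\ket{\Psi_A}$. An equivalent route that avoids introducing $\ket{\chi}$ is to expand $\ket{\Phi}$ in its Schmidt basis across $A\,|\,A^c$, whose squared coefficients are the eigenvalues of $\sigma_A$, and to apply Cauchy--Schwarz to the resulting sum; this leads to the same intermediate bound $\bra{\Psi_A}\sigma_A\ket{\Psi_A}$ and hence to the same conclusion.
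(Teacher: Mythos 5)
Your proof is correct, and it reaches the paper's conclusion through the same skeleton but with a more elementary justification of the key step. The paper also reduces everything to the quantity $\bra{\Psi_A}\sigma_A\ket{\Psi_A}$ and then bounds it by $\norm{\sigma_A}_\infty = \e^{-S_\infty(\sigma_A)}$ before invoking \eqref{eq:relation}; however, it obtains the inequality $|\braket{\Phi}{\Psi}|^2 \leq \bra{\Psi_A}\sigma_A\ket{\Psi_A}$ by interpreting the overlap as a fidelity between pure states and appealing to Uhlmann's theorem, i.e., the monotonicity of the fidelity under partial trace, $F(\rho,\sigma)\leq F(\rho_A,\sigma_A)$. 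You instead derive the same inequality by hand: contracting $\ket{\Phi}$ with $\bra{\Psi_A}$, applying Cauchy--Schwarz against $\ket{\Psi_{A^c}}$, and identifying $\braket{\chi}{\chi}=\bra{\Psi_A}\sigma_A\ket{\Psi_A}$. Your route is self-contained and avoids citing any external theorem, which makes the lemma's elementary character transparent; the paper's route packages the same computation as a special case of fidelity data-processing, which signals how the argument would generalize (for instance to mixed initial states or to coarse-grainings more general than a partial trace). Your alternative Schmidt-basis phrasing is likewise valid and amounts to the same bound. Either way, the finishing steps --- $\bra{\Psi_A}\sigma_A\ket{\Psi_A}\leq \lambda_{\max}(\sigma_A)=\e^{-S_\infty(\sigma_A)}$ followed by $S_\infty(\sigma_A)\geq \frac{\alpha-1}{\alpha}S_\alpha(\sigma_A)$ from \eqref{eq:relation} --- coincide exactly with the paper's.
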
 
 \begin{proof}
 To prove this Lemma, we make use of well-known properties of the fidelity
 \begin{align}
 	F(\rho,\sigma) :=  \tr\left( ( {\rho}^{1/2}\sigma{\rho}^{1/2})^{1/2}\right)
 \end{align}
 between two quantum states. 
 In the case of two pure states, this quantity is simply given by their overlap. 
 Moreover, the fidelity can only increase when we trace out a subsystem, which follows from Uhlmann's theorem \cite{Uhlmann1976,Nielsen2000} as
 \begin{align}
 	F(\rho,\sigma) \leq F(\rho_A,\sigma_A).
 \end{align}
 The reduced state of the product state vector $\ket{\Psi}$ is again pure. 
 We denote this state vector by $\ket{\Psi_A}$ and would like to 
 remind the reader that $\sigma_A$ is the reduced state corresponding
 to $\ket{\Omega}$.
 We then have
 \begin{align}
 	|\braket{\Psi}{\Omega}|^2 &\leq F(\proj{\Psi_A},\sigma_A)^2 \\
 	&=\bra{\Psi_A}\sigma_A\ket{\Psi_A} \leq \norm{\sigma_A}_\infty = \e^{-S_\infty(\sigma_A)}.
 \end{align}
 The claim then follows from \eqref{eq:relation}. 
 \end{proof}

 \subsection*{\ref*{app:example}: Entropy calculations for the example}
 Here, we show that $\ket{\thestate}$ fulfills the properties claimed in Proposition~\ref{prop:example}. 
 We will make use of Lemma~\ref{lemma:overlap} of \ref*{app:prooflemma}. 
 In particular, for $\alpha = \infty$ the lemma states that $|\braket{\Phi}{\psi}|^2 \leq \exp[- S_\infty(\sigma_A)]$ with $S_\infty(\sigma_A) = - \log \norm{\sigma_A}$.
 
 First let us determine the normalization of $\ket{\thestate}$:
 \begin{align}
 	\norm{\sqrt{1-\epsilon} \ket{\Psi} +\sqrt{\epsilon}\ket{|\Omega}}_2^2 &= 1 + \sqrt{(1-\epsilon)\epsilon} (\braket{\Psi}{\Omega} + c.c.).
 \end{align}
 For the maximally entangled state $\ket{\Omega}$ we have 
 $S_\infty\left(\tr_{A^c}\left(\ketbra{\Omega}{\Omega}\right)\right) = \log d_A$, where $d_A:= d^{|A|}$ denotes the Hilbert-space dimension of the region $A$.
 Since $\ket{\Psi}$ is a product state, Lemma~\ref{lemma:overlap} yields (for $\alpha=\infty$)
 \begin{align}
 	|\braket{\Psi}{\Omega}|^2 \leq \e^{-\log(d) |A|}.
 \end{align}
 Hence, the vector $\sqrt{1-\epsilon}\ket{\Psi} + \sqrt{\epsilon}\ket{\Omega}$ is normalized up to an error that is exponentially small in $A$. Since we are interested in scaling relations, we will in the following  therefore simply assume
 \begin{align}
 	\ket{\Psi^\epsilon} = \sqrt{1-\epsilon}\ket{\Psi} + \sqrt{\epsilon}\ket{\Omega}.
 \end{align}
 One may explicitly check in the calculations below that the error introduced by this assumption is irrelevant for the scaling statements that we are making.

 Let us now show that the state vector $\ket{\thestate}$ fulfills a volume law in terms of von-Neumann entropy, but not in terms of any R\'enyi entropy with $\alpha>1$.
 Since $\ket{\Psi}$ is a product, 
 its marginals on the region $A$ and its complement $A^c$ are pure and we denote the corresponding 
 state vectors as $\ket{\Psi_A}$ and $\ket{\Psi_{A^c}}$, respectively. It is useful to introduce the state vector $\ket{\Phi}$ in the Hilbert space of region $A$ given by
 \begin{align}
 	\ket{\Phi} := \frac{(\one\otimes \bra{\Psi_{A^c}})\ket{\Omega} }{\norm{(\one\otimes \bra{\Psi_{A^c}})\ket{\Omega}}_2}.
 \end{align}
 Due to Lemma~\ref{lemma:overlap}, the norm $\delta\coloneqq\norm{(\one\otimes \bra{\Psi_{A^c}})\ket{\Omega}}_2$ can be upper bounded as 
 \begin{align}
 		\delta^2 = \mathrm{max}_{\ket{\chi_A}}\left|(\bra{\chi_A}\otimes\bra{\Psi_{A^c}})\ket{\Omega}\right|^2 \leq \e^{-\log(d_A)} = \e^{-\log(d) |A|}.
 \end{align}
 With these ingredients, let us compute the reduced density matrix of the state
 $\ket{\thestate}$ on region $A$ as
 \begin{align}
 	\tr_{A^c}(\proj{\thestate}) &= (1-\epsilon)\proj{\Psi_A} + \epsilon \frac{\one}{d_A} + \sqrt{\epsilon(1-\epsilon)}\Tr_{A^c}(\ketbra{\Psi}{\Omega} + \mathrm{h.c.}) \nonumber \\
 					&= (1-\epsilon)\proj{\Psi_A} + \epsilon \frac{\one}{d_A} + \sqrt{\epsilon(1-\epsilon)}\delta(\ketbra{\Psi_A}{\Phi} + \mathrm{h.c.}),
 \end{align}
 where the second line follows by taking the partial trace in terms of a basis that includes $\ket{\Psi_{A^c}}$ as one of its elements. 
 For small $\delta>0$, the spectrum of this density matrix is well-approximated by
 the spectrum of the state $\xi_A:=  (1-\epsilon)\proj{\Psi_A} + \epsilon
 {\one}/{d_A}$, since the perturbation proportional to $\delta$ has support on a two-dimensional subspace spanned by $\Psi_A$ and the part of $\Phi$ that is perpendicular to $\Psi_A$. This perturbation thus only affects two spectral values of $\xi_A$.
 Since $\delta$ is exponentially small in the size of $A$, we simply neglect this perturbation in what follows and work with the state $\xi_A$. The von-Neumann entropy of the state $\xi_A$ can easily be calculated as
 \begin{align}
 	S_1(\xi_A) &= -(1-\epsilon+\frac{\epsilon}{d_A}) \log(1-\epsilon+\frac{\epsilon}{d_A}) + (d_A-1)\frac{\epsilon}{d_A}\log\left(\frac{\epsilon}{d_A}\right) \nonumber \\
 					       &\geq \frac{d_A-1}{d_A}\epsilon\log(d_A)\, \approx \epsilon\,\log(d) |A|,
 \end{align}
 where the last approximation is exponentially good in $|A|$. 
 We thus find a volume law in terms of the von-Neumann entropy. Let us now calculate the R\'enyi entropy for $\alpha>1$. 
 In this case we can again use the inequality
 \begin{align}
 	S_\alpha(\xi_A) \leq \frac{\alpha}{\alpha-1}S_\infty(\xi_A), \quad \alpha>1.
 \end{align}
 Since the largest eigenvalue of $\xi_A$ is given by $(1-\epsilon+\frac{\epsilon}{d_A})$, we then have
 \begin{align}
 S_\alpha(\xi_A) &\leq \frac{\alpha}{\alpha-1}\log\left(\frac{1}{1-\epsilon + \frac{\epsilon}{d_A}}\right)\\
        &\leq \frac{\alpha}{\alpha-1}\log\left(\frac{1}{1-\epsilon}\right),
 \end{align}
 where we have made use of the fact that $\log(1/x)$ is monotonically decreasing in $x$. 
 We thus find that all R\'enyi entropies for $\alpha>1$ are upper bounded by a constant independent of the size of $A$. 
 
 \subsection{\ref*{app:finitecorrelationlength}: Extensive entropy for states prepared by finite-depth circuits and MPS}
 In this appendix, we show that a state vector $\ket{\Psi}$ that is prepared by a translationally invariant, finite-depth, local quantum circuit has either extensive R\'enyi-2 entropy on a suitable sub-system or is a product state itself.
 By a finite-depth, local quantum circuit we mean a unitary operator $U$ that can be written as a concatenation of $D$ unitaries which consists of a tensor product of unitary operators, each of which acts on spins that are separated by a lattice distance at most $k$ and we assume that $D$ does not depend on the system size.
 By translationally invariance, we simply mean that the final state is invariant under shifts of the underlying lattice by some finite lattice vector. We assume this to simplify the discussion, but expect similar results to hold for generic such states. 
 We note that states prepared by a local, finite depth quantum circuit are a special sub-class of MPS. We discuss the case of general MPS further below.
 
 The crucial property that we will make use of is that $U$ has a strict ``light-cone'' for the spreading of correlations. This implies that spins that are separated by a distance more than $2kD+1$ are uncorrelated. 
 Let us hence consider as subsystem $A$ a sub-lattice $\tilde \Lambda$ of spins that are separated by at least a distance $2kD+1$. Then the reduced state on $A$ is a product state:
 \begin{align}
 	\rho_A = \bigotimes_{x\in\tilde\Lambda} \rho_x.
 \end{align}
 By translational invariance, we furthermore ensure that all the $\rho_x$ are identical, i.e.\ $\rho_x = \varrho$. Since the R\'enyi entropies are additive over tensor-products, we thus find that
 \begin{align}
 	S_2(\rho_A) = |\tilde \Lambda| S_2(\varrho) \sim N S_2(\varrho).
 \end{align}
 Note that by assumption $\varrho$ does not depend on $N$. Hence, if the state has a non-extensive amount of entropy it must have $S_2(\varrho)=0$ and in consequence the state $\varrho$ must be pure. By repeating this argument for different sub-lattices $\tilde \Lambda$ we conclude that every spin is in a pure state and hence $\ket{\Psi}$ is a product state. In conclusion, $\ket\Psi$ has either extensive R\'enyi-2 entropy or is a product state. 
 
 Let us now turn to the case of generic matrix product states. For simplicity, we again consider here the case of translationally invariant MPS, which take the form
 	\begin{align}
 		\ket{\mathrm{MPS}[A]_N} := \frac{1}{\sqrt{Z}}\sum_{i_1,\ldots i_N=1}^d \Tr(A_{i_1}\cdots A_{i_N}) \ket{i_1,\ldots,i_N},
 	\end{align}
 where the $\{A_i\}_{i=1}^d$ are a set of $D\times D$ matrices ($D$ in this case called \emph{bond dimension} and is related to, but not the same as the $D$ above). In the following we assume that the $A_i$ and the bond dimension are constant, i.e., independent of $N$.
 Generic MPS are so-called \emph{injective} MPS, which essentially means that the matrices $A_i$ cannot be brought into a common block-diagonal form by a similarity transformation. 
 In this case the MPS has a finite-correlation length. It is useful to think of injective MPS as generalizations from the above setting of local quantum circuits to the case of quasi-local quantum circuits.
 Indeed, for injective MPS it can be shown (using the results of Ref.~\cite{Wolf2008}) 
 that the reduced state on a sub-lattice $\tilde \Lambda$ in which the spins are separated by a distance $l$ is exponentially close to a product state,
 \begin{align}
 	\norm{\rho_{\tilde \Lambda}-\bigotimes_{x\in\tilde\Lambda}\rho_x}_1 \leq |\tilde\Lambda| C \e^{-l/\xi},
 \end{align}
 for some constants $C,\xi>0$. Thus, at least if $l$ grows weakly with the system size as $\log(N)$, this strongly suggests that the R\'enyi entropies are extensive on $\tilde \Lambda$. This in turn leads to a R\'enyi entanglement entropy of order $N/\log(N)$ and, correspondingly, to equilibration with a precision of order $\exp(-N/\log(N))$.  
 However, we have not been able to give a formal proof of this statement so far.
 
 Let us remark, however, that for the proof of Theorem~\ref{thm:main} it suffices to show that energy eigenstates with finite energy density have exponentially small overlap with product states.
 We will now prove that generic, translationally invariant MPS fulfill this criterion. Thus exponentially good equilibration also follows if the eigenstates of the Hamiltonian are generic MPS with sufficiently strong entanglement. While we assume translational invariance here, we expect that similar results also hold (with high probability) in the case where the local MPS-tensors are drawn at random. 
 
 \begin{lemma}[Exponentially small overlap of MPS with product states]
 	Consider translationally invariant, injective MPSs $\ket{\mathrm{MPS}[A]_N}$ with constant bond dimension.
 	Then either $\ket{\mathrm{MPS}[A]_N}$ is a product or there exists a constant $\kappa>0$
 	such that
 	\begin{align}
 		|\langle \phi|^{\otimes N} \ket{\mathrm{MPS}[A]_N}| \leq \e^{-\kappa N}
 	\end{align}
 	for all $\ket{\phi}\in\CC^d$.
 \end{lemma}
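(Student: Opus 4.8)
The plan is to collapse the $N$-fold overlap onto the trace of a single $D\times D$ matrix and then control its spectral radius. Writing $|\phi\rangle=\sum_i\phi_i\ket{i}$ and introducing $T_\phi:=\sum_i\overline{\phi_i}A_i$, the periodic boundary conditions give directly
\[
\langle\phi|^{\otimes N}\ket{\mathrm{MPS}[A]_N}=\frac{1}{\sqrt{Z}}\,\Tr\!\left(T_\phi^N\right),
\]
so that $|\langle\phi|^{\otimes N}\ket{\mathrm{MPS}[A]_N}|\le D\,\rho(T_\phi)^N/\sqrt{Z}$, where $\rho(T_\phi)$ is the spectral radius (using that $\Tr(T_\phi^N)$ is a sum of $N$-th powers of at most $D$ eigenvalues). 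First I would bring the tensor into left-canonical gauge, $\sum_iA_i^\dagger A_i=\one$, which changes the physical state only by normalisation and fixes the transfer matrix $\mathbb{E}:=\sum_iA_i\otimes\overline{A_i}$ to have spectral radius one. For an injective MPS the associated map $\mathcal{E}(X)=\sum_iA_iXA_i^\dagger$ is primitive, so the leading eigenvalue $1$ is simple and the rest of the spectrum lies strictly inside the unit disc; hence $Z=\Tr(\mathbb{E}^N)\to c_0>0$ and $\sqrt{Z}$ stays between positive constants for all large $N$. It then remains to show that $r:=\max_{\norm{\phi}=1}\rho(T_\phi)$ is strictly below one whenever the state is not a product.

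The heart of the argument will be a characterisation of the borderline case $\rho(T_\phi)=1$. In left-canonical form the map $V:=\sum_i\ket{i}\otimes A_i$ is an isometry ($V^\dagger V=\sum_iA_i^\dagger A_i=\one$) and $T_\phi=(\langle\phi|\otimes\one)V$, so $\norm{T_\phi}\le\norm{\phi}=1$ and therefore $\rho(T_\phi)\le1$ for every normalised $\phi$. Suppose $\rho(T_\phi)=1$; then $\norm{T_\phi}=1$ and any eigenvector $v$ with eigenvalue $\nu$, $|\nu|=1$, attains the operator norm, $\norm{T_\phi v}=\norm{v}$. Tracking the equality case of Cauchy--Schwarz inside $(\langle\phi|\otimes\one)(Vv)$ forces $Vv=\ket{\phi}\otimes u$ for some $u$, i.e.\ $A_iv=\phi_iu$ for all $i$; substituting into $T_\phi v=\nu v$ gives $u=\nu v$ and hence $A_iv=\nu\phi_i\,v$. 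Thus a modulus-one eigenvalue of $T_\phi$ can arise only if all the $A_i$ share a common right eigenvector $v$.

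With this characterisation I would conclude the dichotomy as follows. If the bond dimension is $D=1$ the state is manifestly a product, so assume $D\ge2$. A common eigenvector $v$ would make the line $\CC v$ invariant under every $A_i$, hence under every product $A_{i_1}\cdots A_{i_L}$, so all such products would lie in the proper subspace $\{M:Mv\in\CC v\}$ of $M_D(\CC)$ and could never span the full matrix algebra, contradicting injectivity. Hence no common eigenvector exists and $\rho(T_\phi)<1$ for every normalised $\phi$; since $\phi\mapsto\rho(T_\phi)$ is continuous and the unit sphere is compact, the maximum $r<1$ is attained. Combining this with the bounds above gives $|\langle\phi|^{\otimes N}\ket{\mathrm{MPS}[A]_N}|\le(D/\sqrt{c_0})\,r^{N}\le\e^{-\kappa N}$ for any fixed $\kappa<\log(1/r)$ and all large $N$, the prefactor being absorbed by slightly decreasing $\kappa$. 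I expect the middle step to be the real obstacle: pinning down that a peripheral eigenvalue of $T_\phi$ is equivalent to a shared eigenvector of the tensors is where the isometric gauge and the equality analysis of the norm bound do the work, while the transfer-matrix normalisation and the injectivity argument are comparatively routine bookkeeping.
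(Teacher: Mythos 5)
Your proof is correct, but it proceeds along a genuinely different route than the paper's. The paper exploits the parent-Hamiltonian structure of injective MPS: the state is the unique state stabilized by commuting local projectors $P_x$ on non-overlapping neighborhoods, so the squared overlap is bounded by $\prod_x \langle \Phi_{V_x}|P_x|\Phi_{V_x}\rangle$, a product of identical factors each strictly less than $1$ (uniqueness rules out the value $1$, and compactness of the set of single-site states makes the gap uniform). You instead work with the transfer-matrix picture: the overlap collapses to $\Tr(T_\phi^N)/\sqrt{Z}$, the left-canonical gauge makes $T_\phi=(\bra{\phi}\otimes\one)V$ a contraction via the isometry $V$, and your equality-case analysis showing that a peripheral eigenvalue of $T_\phi$ forces a common eigenvector $A_i v=\nu\phi_i v$ of all Kraus operators — which is incompatible with the spanning (quantum Wielandt) characterization of injectivity — is the key new step; compactness of the unit sphere then gives $r=\max_\phi\rho(T_\phi)<1$. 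Each approach has its merits: the paper's argument uses only that the state is the unique translationally invariant ground state of a local frustration-free Hamiltonian, and therefore transfers verbatim to PEPS in higher dimensions (a point the paper makes explicitly), whereas your argument is intrinsically one-dimensional; in exchange, yours yields an explicit, in principle computable decay rate $\kappa\approx -\log r$ in terms of the spectral radius of $T_\phi$, cleanly isolates the algebraic obstruction (a shared eigenvector), and identifies the $D=1$ versus $D\ge 2$ dichotomy directly, without invoking the parent-Hamiltonian machinery. Two minor points to keep in mind: your final bound, like the paper's (which needs $N$ large enough to fit non-overlapping neighborhoods $V_x$), holds for sufficiently large $N$ after absorbing the prefactor $D/\sqrt{Z}$ into $\kappa$; and you invoke two standard but distinct consequences of injectivity — primitivity of the transfer map (for $Z\to 1$) and the spanning property of products (for the contradiction) — both of which are legitimate and standard in the MPS literature.
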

 Before giving the proof of this Lemma, let us note that we have chosen the product state here to be translationally invariant for convenience. One would arrive at the same conclusion if the product state in question has higher periodicity by coarse-graining the lattice. For example, it could be a charge-density wave of the form 
 $\ket{\uparrow ,\downarrow ,\uparrow ,\downarrow, \ldots}$. All that is required is that the state vector
 can be written as $\ket{\psi}^{N/m}$ for some $m$ corresponding to a coarse graining on the lattice. Then the MPS in question is either a product state on this coarse grained lattice or has exponentially small overlap with all product states on this coarse grained lattice. 
 \begin{proof}
 	The main ingredient for the proof is that an injective MPS is always the unique ground-state of a local, frustration free Hamiltonian \cite{Fannes1992,Nachtergaele1996,Perez-Garcia2006,Perez-Garcia2007}. 
 	This implies that for each lattice site $x$ there exists a local projector $P_x$ on a finite neighborhood $V_x$ of the lattice sites $x\in \Lambda$ such that
 	\begin{align}\label{eq:stabilizer}
 		P_x\otimes \one_{\Lambda\setminus V_x} \ket{\mathrm{MPS}[A]_N}= \ket{\mathrm{MPS}[A]_N}.
 	\end{align}
 	Furthermore, $\ket{\mathrm{MPS}[A]_N}$ is the only vector state fulfilling this condition for all $x$.
 	Let us choose a sub-lattice $\tilde \Lambda$ so that the neighborhoods $V_x$ do not overlap. Then the operators $P_x \otimes \one_{\Lambda\setminus V_x}$ commute. 
 	In the following, we will omit the identity for brevity of notation and simply write $P_x$ instead of $P_x \otimes \one_{\Lambda\setminus V_x}$. 
 	Let $\ket{\Phi} = \ket{\phi}^{\otimes N}$. Then we have
 	\begin{align}
 		\left|\langle\Phi|\mathrm{MPS}[A]_N\rangle\right|^2 &= \langle \Phi| \prod_{x\in\tilde \Lambda}P_x \proj{\mathrm{MPS}[A]_N} \prod_{y\in\tilde \Lambda}P_y |\Phi\rangle \\
 		&\leq \langle \Phi | \prod_{x\in\tilde \Lambda}P_x |\Phi \rangle \\
 		&= \prod_x \langle \Phi_{V_x}| P_x |\Phi_{V_x}\rangle = \e^{-c |\tilde\Lambda|},
 	\end{align}
 	where we have defined $\ket{\Phi_{V_x}} := \otimes_{x\in V_x} \ket{\phi}$ and implicitly defined $c>0$. 
 	In the first equality we used \eqref{eq:stabilizer}, in the second line we used that the projector onto the MPS is positive and in the last line we used that $\ket{\Phi}$ is a product state and that the $P_x$ are translations of each other since the MPS under question is translationally invariant.
 	Now suppose that $c=0$, i.e.,
 	\begin{align}
 	\langle \Phi_{V_x}| P_x |\Phi_{V_x}\rangle = 1. 
 	\end{align}
 	In this case we deduce that $P_x \ket{\Phi} = \ket{\Phi}$ for all $x$. However, by assumption $\ket{\mathrm{MPS}[A]_N}$ is the unique such state vector and we get $\ket{\mathrm{MPS}[A]_N}=\ket{\Phi}$.
 	Now consider
 	\begin{align}
 		\sup_{\ket{\phi}} \langle \Phi| P_X |\Phi\rangle \coloneqq \e^{-\tilde \kappa},
 	\end{align}
 	where again $\ket{\Phi} = \ket{\phi}^{\otimes N}$.
 	Since the the set of pure states is compact, the supremum is in fact achieved. But by assumption it cannot be equal to $1$, therefore $\tilde \kappa>0$. This step completes the proof by identifying
 	$\kappa>0$ as the constant that fulfills	$\kappa N = \tilde\kappa |\tilde \Lambda|$. 
 \end{proof}
 While we have above only considered the case of matrix-product states, we note that the proof applies to higher-dimensional tensor network states such as \emph{projected entangled pair states} (PEPS), 
 since the only property that we used is that the state in question is a unique, translationally invariant ground state of a local, frustration-free Hamiltonian.

\subsection{\ref*{app:prooftheorem}: Proof of Theorem~\ref{thm:main}}
Here we provide the proof of the main result. Before we spell out the details, let us explain the basic structure of the proof. 
Due to the monotonicity of the R\'enyi entropies in $\alpha$ it suffice to derive a lower bound for $S_\infty$ in order to get the statement of Theorem~\ref{thm:main} for all values of $\alpha \geq 0$. 
In general, there exists a basis of energy eigenstates that diagonalizes the time averaged state $\omega_\Lambda$. 
The eigenvalues of $\omega_\Lambda$ are given by the absolute-squared overlap of the initial state with these energy eigenstates. 
Therefore, an exponential upper bound on the maximal overlap of the initial product state vector $\ket{\Psi}_\Lambda$ with an energy eigenstate implies a lower bound on $S_\infty$. 
Let us denote the mean energy density of $\ket{\Psi}_\Lambda$ by $e$.
We establish a bound on the overlaps by two separate arguments -- one applying to eigenstates whose energy density deviates from $e$  by \emph{at most} $\delta$ and  one applying to eigenstates with energy density that deviates from $e$ by \emph{at least} $\delta$. 
Taken together we thus bound the overlaps with \emph{all} energy eigenstates for a given system size $|\Lambda|$, which establishes our final bound.
The constant $\delta$ is a small constant that we choose in the course of the proof. 

Before we continue, note that entanglement-ergodicity is defined for \emph{sequences} of Hamiltonians of increasing system sizes, whereas the statement of Theorem~\ref{thm:main} applies to every system size larger than $N_0(e)$, which will be identified below. 

As stated above, we first show that $\ket{\Psi}_\Lambda$ has little overlap with energy eigenstates that differ by a large amount in energy. This is due to the fact that the energy distribution of product states is sharply peaked around its mean. 
More precisely, we make use of the following Theorem due to Anshu \cite{Anshu2016}, whose formulation we slightly adapt to the current notation:
\begin{theorem}[Concentration bound \cite{Anshu2016}]
	Let $H_\Lambda$ be a strictly local, uniformly bounded Hamiltonian. Then there exists a constant $m>0$ independent of the system size such that for any product state  $\ket{\Psi}_\Lambda$with energy-density $e$ and any $\delta>\sqrt{2/(mN)}$, we have
	\begin{align}
		\bra{\Psi} \Pi_{(-\infty, N(e-\delta)]}\ket{\Psi} &\leq  \e^{-m \delta^2 N},\\
		\bra{\Psi} \Pi_{[N(e+\delta),\infty)}\ket{\Psi} &\leq  \e^{-m \delta^2 N},
	\end{align}
	where $\Pi_I$ denotes the projection onto the subspace spanned by energy eigenstates with energies in the interval $I$.
\end{theorem}

 In particular, this theorem implies that for any $\delta>\sqrt{2/(mN)}$, a product state vector $\ket{\Psi}_\Lambda$ has exponentially small overlap with energy eigenvectors  $\ket{e'}_\Lambda$ whose energy density $e'$ deviates by more than $\delta$ from its mean energy $e$,
 \begin{align}\label{eq:proof:anshu}
	 |\braket{\Psi}{e'}_\Lambda|^2 \leq \e^{-m \delta^2 N}\ \ \text{if}\  |e' - e| \geq \delta>\sqrt{2/(mN)} 
 \end{align}

 It remains to bound the overlap of $\ket{\Psi}_\Lambda$, which has energy density $e$, with eigenstates with energy density within the interval $[e-\delta,e+\delta]$. Here, the main ingredient is the fact that product states have exponentially small overlap with energy eigenstates that fulfill a weak volume law. 
 By the definition of entanglement-ergodic systems, for sufficiently large systems and for any energy eigenstate state $\rho_\Lambda(e') = \ketbra{e'}{e'}_\Lambda$ with energy density $e'$ there exists a region $A_\Lambda$ such that $S_{2}(\rho_{A_\Lambda}(e')) \geq g(e') N$ provided that $N\geq N_0$ (where $N_0$ is given by entanglement-ergodicity). 
 Using Lemma~\ref{lemma:overlap}, we find that the overlap of a product state with eigenstates of energy density $e'$ is hence bounded as
 \begin{align}
 	|\braket{\Psi}{e'}_\Lambda|^2 \leq \e^{-\frac{1}{2} g(e') N}. 
 \end{align}
 We now use the regularity of $g$. 
 Let $K>0$ be its Lip\-schitz-constant. 
 Then we know that $g(e') \geq g(e)-K\delta$ for all $e'\in[e-\delta,e+\delta]$. 
 We choose
 \begin{align}
	 N_0(e) := \max\left\{N_0, \frac{8 K^2}{m g(e)^2}\right\},\quad \delta = \frac{g(e)}{2K}.
\end{align}
This ensures that $\delta > \sqrt{2/(mN)}$ for all $N> N_0(e)$. 
Then we obtain
 \begin{align}\label{eq:proof:ergodicity}
 	|\braket{\psi}{e'}_\lambda|^2 \leq \e^{-\frac{1}{4} g(e) N},\quad e'\in[e-\delta,e+\delta]. 
 \end{align}
 In summary, we find that \emph{all} eigenstates have exponentially small overlap with $\ket{\Psi}_\Lambda$, either because their energy density differs substantially from $e$ or because they fulfill a weak volume law. 
 Taking 
 \begin{align}
 	k(e) = \frac{1}{4}g(e) \operatorname{min}\{1, m\,g(e)/K^2 \}, 
 \end{align}
 we thus find $|\braket{\psi}{e_i}_\lambda|^2 \leq \exp(-k(e) N)$
for all energy eigenstates $\ket{e_i}_\Lambda$.
Therefore
\begin{align}
	S_\alpha(\omega_\Lambda)\geq S_\infty(\omega_\Lambda) \geq k(e) N,
\end{align}
which completes the proof.

 \subsection{\ref*{app:quasilocal}: Stability of entanglement-ergodicity under quasi-local unitaries}
 In this section we prove the stability of entanglement-ergodicity under time evolution generated by a local Hamiltonian.
 More precisely, we establish the following result:
 \begin{proposition}[Stability under quasi-local unitaries]\label{lemma:stability}
 Let $H_\Lambda$ be an entanglement-ergodic system where the regions $A_\Lambda$ have asymptotically vanishing surface-to-volume ratio and let $U_\Lambda'$ be a quasi-local sequence of unitaries. 
 Then the system with Hamiltonian $U_\Lambda' H_\Lambda U_\Lambda'^\dagger$ is entanglement-ergodic. 
 \end{proposition}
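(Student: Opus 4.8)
The plan is to exploit that conjugating $H_\Lambda$ by $U_\Lambda'$ leaves the spectrum unchanged and maps each eigenvector $\ket{e_i}_\Lambda$ to $U_\Lambda'\ket{e_i}_\Lambda$ with identical eigenvalue, hence identical energy density $e_i$. Consequently the function $g$ of the original system is the natural candidate for the transformed system, and the whole task reduces to producing, for every transformed eigenvector, a region $B_\Lambda$ of finite volume fraction whose reduced state $\rho_{B_\Lambda}:=\tr_{B_\Lambda^c}\bigl(U_\Lambda'\proj{e_i}_\Lambda U_\Lambda'^\dagger\bigr)$ satisfies $S_2(\rho_{B_\Lambda})\geq g(e_i)N-o(N)$. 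The sublinear correction is harmless by Remark~(iii), and $g$ remains the same Lipschitz, positive function, so this suffices.

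First I would fix $\ket{e_i}_\Lambda$ together with the region $A_\Lambda$ supplied by entanglement-ergodicity of $H_\Lambda$, so that $S_2(\rho_{A_\Lambda}(e_i))\geq g(e_i)N$, and simply take $B_\Lambda:=A_\Lambda$. The structural input is that a quasi-local $U_\Lambda'$ factorizes, relative to the cut $A_\Lambda\,|\,A_\Lambda^c$, as $U_\Lambda'\approx(U_A\otimes U_{A^c})\,W_\partial$, where $U_A,U_{A^c}$ are supported strictly inside $A_\Lambda,A_\Lambda^c$ and $W_\partial$ is supported on a shell around the boundary of some fixed width $\ell$ set by the locality length of $U_\Lambda'$. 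Conjugation by the product part $U_A\otimes U_{A^c}$ leaves $\rho_{A_\Lambda}$ unitarily equivalent and hence leaves every R\'enyi entropy invariant, so only $W_\partial$ can alter the entanglement across the cut.

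Second, I would bound the entropy change caused by $W_\partial$. Since $W_\partial$ reaches across the cut only through the boundary degrees of freedom $R:=\supp(W_\partial)\cap A_\Lambda$, a standard argument bounds the resulting change of the purity $\tr(\rho_{A_\Lambda}^2)$ by a bounded multiplicative factor $d^{\,c|R|}$, equivalently $|S_2(\rho_{A_\Lambda}(W_\partial e_i))-S_2(\rho_{A_\Lambda}(e_i))|\leq c\,|R|\log d$. Because $\ell$ is fixed, $|R|=O(\ell\,|\partial A_\Lambda|)$, and the hypothesis that $A_\Lambda$ has asymptotically vanishing surface-to-volume ratio forces $|\partial A_\Lambda|=o(N)$ (recall $|A_\Lambda|=\Theta(N)$). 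Hence $S_2(\rho_{B_\Lambda})\geq g(e_i)N-o(N)$, exactly of the required form. This argument is already fully rigorous when $U_\Lambda'$ is a strictly local, finite-depth circuit, for which the factorization is an exact identity once the shell width exceeds the light cone.

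\emph{Main obstacle.} For a genuine quasi-local unitary generated by finite-time evolution under a local Hamiltonian, the factorization holds only up to an error $\delta$ with $\norm{\delta}$ exponentially small in $\ell$ rather than zero. The delicate point is that $S_2$ and $S_\infty$ are extremely sensitive to trace-norm perturbations precisely in the extensive-entropy regime: a perturbation of the state shifts the purity additively, so keeping $-\log$ of it near $g(e_i)N$ demands $\norm{\delta}\ll\e^{-g(e_i)N}$, which would force $\ell$ to grow linearly in $N$ and thereby inflate the boundary shell beyond the sublinear budget. I expect this to be the crux, and it is presumably why the companion statement for \emph{states} prepared by quasi-local unitaries (Section~F) only yields $\exp(-N^\beta)$ with $\beta<1$ rather than full $\exp(-\Theta(N))$. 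The cleanest way around it is to avoid truncating the tails altogether and instead control the entropy directly along the generating evolution: writing $U_\Lambda'(s)$ for the interpolating unitary, locality of the generator bounds $\frac{\rmd}{\rmd s}S_2\bigl(\rho_{A_\Lambda}(U_\Lambda'(s)e_i)\bigr)$ by a sum over interaction terms straddling $\partial A_\Lambda$, hence by $O(|\partial A_\Lambda|)$, and integrating over the finite evolution time reproduces the boundary-sized correction without any approximation. Establishing such a dimension-independent R\'enyi-$\alpha$ ($\alpha>1$) entanglement-rate estimate is the one genuinely nontrivial ingredient the plan relies on.
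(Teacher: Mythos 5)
You reduce the problem correctly (same regions $A_\Lambda$, same function $g$, spectrum preserved under conjugation), your factorization argument is valid for strictly local finite-depth circuits (a point the paper also makes), and your diagnosis of why that argument breaks down for genuine quasi-local unitaries --- exponentially small factorization tails versus the $\e^{-g(e)N}$ sensitivity of the purity --- is exactly right. However, your proposed way out is a deferral rather than a proof: the estimate $\bigl|\frac{\rmd}{\rmd s}S_2\bigl(\rho_{A_\Lambda}(s)\bigr)\bigr| = O(|\partial A_\Lambda|)$ that you invoke at the end \emph{is} the content of the paper's proof. The paper establishes precisely this statement as Lemma~\ref{lemma:entangling} (the R\'enyi-2 entangling rate bound) and then obtains the proposition by decomposing $H=H_A+H_{A^c}+H_{\partial A}$, noting that only the $O(|\partial A|)$ straddling terms contribute to the rate, and integrating over the finite evolution time to get $|S_2(\rho_A(t))-S_2(\rho_A(0))|\le 4t\,|\partial A|\,\max_x\nnorm{\mathbf{C}_x}_1=o(N)$. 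What you label ``the one genuinely nontrivial ingredient the plan relies on'' is therefore not an ingredient; it is the theorem, and your proposal does not prove it.

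The gap is substantial because no off-the-shelf result supplies this rate bound: small-incremental-entangling theorems of the kind known for the von Neumann entropy \cite{PhysRevLett.111.170501} do not extend to R\'enyi entropies with $\alpha\neq 1$ in terms of the interaction norm alone, so the ``dimension-independent R\'enyi-$\alpha$ entanglement-rate estimate'' you ask for must be formulated with care to even be true. The paper's lemma circumvents this by bounding the rate through the coefficient $1$-norm of the interaction in a Hermitian product-operator basis: one writes $\Gamma(A,H)=2\ii\,\tr\bigl(\rho_A\tr_B([V,\rho_{AB}])\bigr)/\tr(\rho_A^2)$ (the parts $H_A$, $H_B$ drop out by cyclicity), expands $V=\sum_{i,j}c_{i,j}A_i\otimes B_j$ with $\norm{A_i}=\norm{B_j}=1$, applies the Cauchy--Schwarz inequality termwise, and uses the operator inequality $\tr_B(\rho_{AB}B_j)^\dagger\tr_B(\rho_{AB}B_j)\le\norm{B_j}^2\rho_A^2$, whereby the purity $\tr(\rho_A^2)$ cancels between numerator and denominator and one obtains $|\Gamma(A,H)|\le 4\sum_{i,j}|c_{i,j}|$ --- a constant for bounded, strictly local terms. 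This cancellation is exactly the mechanism your sketch is missing and is what makes the bound insensitive to the extensive entropy of the state. As a side remark, your attribution of the weaker $\exp(-N^\beta)$ scaling of \ref*{app:state_quasilocal} to this entropy-stability issue is incorrect: there the entanglement part transfers at full exponential strength via the present proposition, and the loss comes solely from the sub-exponential energy-concentration bound available for initial states with finite correlation length.
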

 Here, by a quasi-local unitary, we mean the following: 
 Consider a sequence of unitaries $U_\Lambda'$. 
 We call the sequence $U_\Lambda'$ quasi-local if there exists a sequence of $l$-local Hamiltonians $H'_\Lambda=\sum_x h'_x$ with $\norm{h'_x}\leq 1$ and a finite time $T$ such that
 \begin{align}
 U_\Lambda' = \e^{-\ii H'_\Lambda t_\Lambda},\quad t_\Lambda\leq T.
 \end{align}
 Examples of quasi-local unitaries are given by finite-depth local quantum circuits, in particular ones that reflect
 deformations of a quantum state within the same quantum phase \cite{PhysRevB.82.155138}.
 This result above follows from the following Lemma. It can be seen as the R\'enyi-2 analogue of the result on entanglement
 rates for the von Neumann entropy laid out in Ref.\ \cite{PhysRevLett.111.170501}.
 
 \begin{lemma}[R\'enyi-2 entangling rate]\label{lemma:entangling}
 	Let $\rho_{AB}$ be a bipartite state that evolves under a Hamiltonian $H=H_A + H_B + V$ with $H_{A,B}$ acting on systems $A,B$, respectively. Decompose $V$ into a Hermitian operator basis as $V= \sum_{i,j=1}^v c_{i,j} A_i\otimes B_j$ with $\norm{A_i}=\norm{B_j}=1$ and $A_i^\dagger= A_i, B_j^\dagger=B_j$ for all $i,j$.
 We define the R\'enyi-2 entangling rate of the subsystem $A$ as 
 \begin{align}
 	\Gamma(A,H) \coloneqq	\left.\frac{\dd S_2(\rho_A)}{\dd t}\right|_{t=0}= 2\ii \frac{\tr(\rho_A \tr_B([V,\rho_{AB}]))}{\tr(\rho_A^2)}.
 \end{align}
 Then 
 \begin{align}
 	|\Gamma(A,H)| \leq 4\norm{\CC}_1 = 4 \sum_{i,j}|c_{i,j}|.
 \end{align}
 \end{lemma}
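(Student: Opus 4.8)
The plan is to turn the definition of $\Gamma(A,H)$ into a scalar to which a Hilbert--Schmidt Cauchy--Schwarz estimate applies, and then to isolate a single operator inequality as the heart of the matter. First I would record that only the coupling $V$ drives the rate: writing $\dot\rho_{AB}=-\ii[H,\rho_{AB}]$ and tracing out $B$, the term $H_B$ drops by cyclicity of $\tr_B$ with respect to operators supported on $B$, while $H_A$ contributes $[H_A,\rho_A]$, which is annihilated upon contracting against $\rho_A$ since $\tr(\rho_A[H_A,\rho_A])=0$. Together with $\tfrac{\dd}{\dd t}S_2(\rho_A)=-2\tr(\rho_A\dot\rho_A)/\tr(\rho_A^2)$ this reproduces the stated formula $\Gamma(A,H)=2\ii\,\tr(\rho_A X)/\tr(\rho_A^2)$ with $X:=\tr_B([V,\rho_{AB}])$. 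Expanding $V=\sum_{i,j}c_{i,j}A_i\otimes B_j$ and using that $A_i$ commutes with the partial trace over $B$, I would rewrite
\begin{align}
X=\sum_{i,j}c_{i,j}[A_i,\rho_A^{(j)}],\qquad \rho_A^{(j)}:=\tr_B\big((\one\otimes B_j)\rho_{AB}\big).
\end{align}

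Next I would bound $|\tr(\rho_A X)|$. Since $\rho_A$ is Hermitian, Cauchy--Schwarz in Hilbert--Schmidt inner product gives $|\tr(\rho_A X)|\le\norm{\rho_A}_2\,\norm{X}_2=\sqrt{\tr(\rho_A^2)}\,\norm{X}_2$, hence $|\Gamma(A,H)|\le 2\norm{X}_2/\sqrt{\tr(\rho_A^2)}$. The triangle inequality together with $\norm{[A_i,\rho_A^{(j)}]}_2\le 2\norm{A_i}_\infty\norm{\rho_A^{(j)}}_2=2\norm{\rho_A^{(j)}}_2$ then yields $\norm{X}_2\le 2\sum_{i,j}|c_{i,j}|\,\norm{\rho_A^{(j)}}_2$. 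Thus the entire statement reduces to the single estimate
\begin{align}\label{eq:rate-crux}
\norm{\rho_A^{(j)}}_2\le\norm{\rho_A}_2=\sqrt{\tr(\rho_A^2)},
\end{align}
which should hold because $\norm{B_j}_\infty=1$; granting it, $\norm{X}_2\le 2\big(\sum_{i,j}|c_{i,j}|\big)\sqrt{\tr(\rho_A^2)}$ and the factors of $\sqrt{\tr(\rho_A^2)}$ cancel against the denominator to give exactly $|\Gamma(A,H)|\le 4\sum_{i,j}|c_{i,j}|$.

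Establishing \eqref{eq:rate-crux} is the crux, and I expect it to be the main obstacle, since the naive estimates (bounding by $\norm{\rho_A}_\infty$, or splitting $\rho_{AB}=\rho_{AB}^{1/2}\rho_{AB}^{1/2}$ symmetrically) only produce $\sqrt{\norm{\rho_A}_\infty}$ or $\sqrt{\tr(\rho_A^3)}$, both too weak to absorb the denominator. The route I would take is to prove the operator sandwich $-\rho_A\le\rho_A^{(j)}\le\rho_A$. From $-\one\le B_j\le\one$ one has $\one\pm B_j\ge0$, and writing $\one\pm B_j=(\sqrt{\one\pm B_j})^2$ and using cyclicity of $\tr_B$ for operators supported on $B$ gives
\begin{align}
\rho_A\pm\rho_A^{(j)}=\tr_B\big((\one\otimes(\one\pm B_j))\rho_{AB}\big)=\tr_B\big((\one\otimes\sqrt{\one\pm B_j})\,\rho_{AB}\,(\one\otimes\sqrt{\one\pm B_j})\big)\ge0,
\end{align}
the argument being a congruence of $\rho_{AB}\ge0$, whose partial trace is again positive. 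Finally, for the Hermitian $S=\rho_A^{(j)}$ and $R=\rho_A\ge0$ with $-R\le S\le R$, the desired norm inequality follows from $\tr(R^2-S^2)=\tr\big((R-S)(R+S)\big)\ge0$, since $R-S\ge0$, $R+S\ge0$, and the trace of a product of two positive operators is nonnegative. This completes the estimate and, together with the reduction above, yields the claimed bound $4\sum_{i,j}|c_{i,j}|$.
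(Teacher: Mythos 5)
Your proposal is correct, and its skeleton coincides with the paper's: both proofs pull the commutator onto $A$ via the identity $\tr_B([A_i\otimes B_j,\rho_{AB}])=[A_i,\tr_B(\rho_{AB}B_j)]$, apply Hilbert--Schmidt Cauchy--Schwarz (you pair $\rho_A$ against the full commutator, the paper pairs $\rho_A A_i$ against $\tr_B(\rho_{AB}B_j)$ inside each term; either way the same constants come out), and reduce everything to the crux estimate $\tr\bigl((\rho_A^{(j)})^2\bigr)\le\tr(\rho_A^2)$. The genuine difference is how that crux is established, and here your route is the sounder one. The paper eigendecomposes $B_j=\sum_\alpha b^\alpha_j P^\alpha_j$ and asserts, by ``upper bounding each $b^\alpha_j$ by the maximum one'', the \emph{operator} inequality $\tr_B(\rho_{AB}B_j)^\dagger\tr_B(\rho_{AB}B_j)\le\norm{B_j}^2\rho_A^2$. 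That termwise bound is not legitimate at the operator level: the cross terms $\tr_B(\rho_{AB}P^\alpha_j)\tr_B(\rho_{AB}P^\beta_j)$ are products of positive operators, which need not be positive. In fact the asserted operator inequality is false in general: taking $\rho_{AB}=\tfrac12\bigl(\proj{0}\otimes\proj{0}+\proj{+}\otimes\proj{1}\bigr)$ and $B_j=\proj{0}-\proj{1}$, one finds $X:=\tr_B(\rho_{AB}B_j)=\tfrac12(\proj{0}-\proj{+})$ and $\rho_A=\tfrac12(\proj{0}+\proj{+})$, for which $\rho_A^2-X^2$ has a negative eigenvalue. What the paper's eigenvalue argument does prove---because the scalars $\tr\bigl(\tr_B(\rho_{AB}P^\alpha_j)\tr_B(\rho_{AB}P^\beta_j)\bigr)$ are nonnegative---is the same inequality \emph{after} taking the trace, and that trace version is all the proof ever uses. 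Your operator sandwich $-\rho_A\le\rho_A^{(j)}\le\rho_A$, obtained by congruence with $\one\otimes\sqrt{\one\pm B_j}$, combined with $\tr(\rho_A^2)-\tr\bigl((\rho_A^{(j)})^2\bigr)=\tr\bigl((\rho_A-\rho_A^{(j)})(\rho_A+\rho_A^{(j)})\bigr)\ge0$, delivers exactly this trace inequality rigorously. So your write-up not only proves the lemma by an argument of the same overall shape, but also quietly repairs the one overstated step in the paper's own proof.
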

 Before providing the proof of the Lemma, let us discuss the application to a strictly
 local many-body systems. In this case $H=\sum_x h_x$ with $\norm{h_x}\leq 1$
 and where the support of each term $h_x$ has diameter bounded by some constant. 
 Choose a region $A$ and decompose $H=H_A +
 H_{A^c} + H_{\partial A}$, where $H_{\partial A}$ contains only those terms
 $h_x$ that have support both on $A$ and $A^c$. Clearly, the number of these terms
 scales like the boundary of $A$. We then have \begin{align}
 	\Gamma(A,H) &= 2\ii \frac{\tr\left(\rho_A \tr_{A^c}([H,\rho])\right)}{\Tr\left(\rho_A^2\right)} =  2\ii \left[\frac{\tr\left(\rho_A \tr_{A^c}([H_{\partial A},\rho])\right)}{\Tr\left(\rho_A^2\right)}+ \frac{\tr\left(\rho_A \tr_{A^c}([H_{A},\rho])\right)}{\Tr\left(\rho_A^2\right)}+ \frac{\tr\left(\rho_A \tr_{A^c}([H_{A^c},\rho])\right)}{\Tr\left(\rho_A^2\right)} \right]\\
 				  &=2\ii \left[\frac{\tr\left(\rho_A \tr_{A^c}([H_{\partial A},\rho])\right)}{\Tr\left(\rho_A^2\right)}+ \frac{\tr\left(\rho_A [H_{A},\rho_A]\right)}{\Tr\left(\rho_A^2\right)}+ \frac{\tr\left(\rho_A \tr_{A^c}([H_{A^c},\rho])\right)}{\Tr\left(\rho_A^2\right)} \right]\\
 	     &= 2\ii \frac{\tr\left(\rho_A \tr_{A^c}([H_{\partial A},\rho])\right)}{\Tr\left(\rho_A^2\right)} 
 	= \sum_{x\in\partial A} 2\ii \frac{\tr\left(\rho_A \tr_{A^c}([h_{x},\rho])\right)}{\Tr\left(\rho_A^2\right)} 
 	=\sum_{x\in\partial A}\Gamma(A,h_x),
 \end{align}
 where $x\in\partial A$ denotes those $x$ for which the support of $h_x$ is neither fully contained in $A$ nor in $A^c$. 
 The third line follows from cyclicity of the trace for the second term and by choosing the eigenbasis of $H_{A^c}$ to trace-out system $A^c$ for the third term in the second line.
 We can now use the lemma to bound $|\Gamma(A,h_x)| \leq 4\norm{\mathbf{C}_x}_1$, where 
 $\mathbf{C}$ is the vector of coefficients $c_{i,j}$ associated to the Hamiltonian term $h_x$. We hence obtain
 \begin{align}
 	|\Gamma(A,H)| \leq 4\,|\partial A|\, \max_{x\in\Lambda}\, \norm{\mathbf{C}_x}_1.
 \end{align}
 Since we assume that the Hamiltonian is strictly local and uniformly bounded, there is a maximum value of $\norm{\mathbf{C}_x}_1$ independent of the system size. 
 Integrating over time we obtain
 \begin{align}
 	\left|S_2(\rho_A(t)) - S_2(\rho_A(0))\right| \leq 4 t\, |\partial A|\, \max_{x\in\Lambda}\norm{\mathbf{C}_x}_1.
 \end{align}
 Thus, time evolution under a local Hamiltonian for a fixed time $t$ (independent of the system size) cannot change the volume law of the R\'enyi-2 entropy as long as the surface area of $A$ is asymptotically vanishingly small in comparison with its volume. Using the equivalence of volume laws for different values of $\alpha>1$, we thus find that an entanglement-ergodic system remains entanglement-ergodic (possibly with a renormalized function $g$) under such time evolution.  
 Clearly, the same is true if a finite number of such time evolutions $U_\Lambda'$ is concatenated as long as this number does not grow with the size of the lattice.

 \begin{proof}[Proof of Lemma~\ref{lemma:entangling}]
 	First we write the entangling rate as (using the same method as above)
 \begin{align}
 	\Gamma(A,H) = 2\ii \frac{\tr(\rho_A \tr_B([V,\rho_{AB}]))}{\tr(\rho_A^2)}.
 \end{align}
 We now use $\tr_B([A\otimes B,\rho_{AB}]) = [A,\tr_B(\rho_{AB}B)]$, which follows from the cyclicity of the trace. This allows us to write
 \begin{align}
 	|\tr(\rho_A\tr_B([V,\rho_{AB}]))| &\leq \sum_{i,j} |c_{i,j}| |\tr(\rho_A [A_i,\tr_B(\rho_{AB}B_j)])|\\
 	 &\leq 2 \sum_{i,j} |c_{i,j}| \left[\tr(\rho_A^2 A^2_i)\tr(\tr_B(\rho_{AB}B_j)^\dagger \tr_B(\rho_{AB}B_j))\right]^{1/2},
 \end{align}
 where we have used the Cauchy-Schwarz inequality. We can now use the fact that $\norm{A_i}=1$ to obtain 
 \begin{align}
 	|\tr(\rho_A\tr_B([V,\rho_{AB}]))| &\leq 2 \sum_{i,j} |c_{i,j}| \left[\tr(\rho_A^2)\tr(\tr_B(\rho_{AB}B_j)^\dagger\tr_B(\rho_{AB}B_j))\right]^{1/2}.
 \end{align}
 Let us now consider the term $\tr\left(\tr_B(\rho_{AB}B_j)^\dagger \tr_B(\rho_{AB}B_j)\right)$. Let us decompose $B_j$ into its eigenbasis and write
 \begin{align}
 	\tr_B(\rho_{AB}B_j) = \sum_\alpha b^\alpha_j \tr_B(\rho_{AB}P^\alpha_j),
 \end{align}
 where $P^\alpha_j$ are the eigenprojectors of $B_j$ and $b^\alpha_j$ its eigenvalues.
 Upper bounding each $b^\alpha_j$ by the maximum one, we then find the operator inequality
 \begin{align}
 	\tr_B(\rho_{AB}B_j)^\dagger \tr_B(\rho_{AB}B_j) &= \sum_{\alpha,\beta} b^\alpha_j b^\beta_j \tr_B(\rho_{AB}P^\alpha_j)\tr_B(\rho_{AB}P^\beta_j)\\
 						&\leq \norm{B_{j}}^2 \tr_B(\rho_{AB})^\dagger \tr_B(\rho_{AB})\\
 		&= \norm{B_{j}}^2 \rho_A^2.
 \end{align}
 Since $\norm{B_j}=1$, we thus finally obtain
 \begin{align}
 	|\tr(\rho_A\tr_B([V,\rho_{AB}]))| &\leq 2 \sum_{i,j} |c_{i,j}| \tr(\rho_A^2) = 2 \norm{\mathbf{C}}_1 \tr(\rho_A^2).
 \end{align}
 We thus obtain for the entangling rate
 \begin{align}
 	|\Gamma(A,H)| \leq 4 \norm{\mathbf{C}}_1.
 \end{align}
 This step completes the proof. 
 \end{proof}
 
 The above lemma shows stability of entanglement-ergodic systems under arbitrary quasi-local unitaries. 
 We would like to highlight that stability under finite-depth quantum circuits follows immediately from the fact that finite-depth quantum circuits can only change 
 the R\'enyi entropy of some region by an amount proportional to the area of the boundary of the region. This is due to the fact that a finite-depth quantum circuit can only (de-)correlate degrees of freedom that 
 are within a finite distance from each other.
 
 \subsection{\ref*{app:state_quasilocal}: Quasi-local initial states}
 Here, we discuss the case of initial state vectors $\ket{\phi_\Lambda}=U_\Lambda'\ket{\Psi}_\Lambda$, where $U_\Lambda'$ is an arbitrary quasi-local unitary and $\ket{\Psi}_\Lambda$ is a product vector.
 In this case, we can show that
 \begin{align}
 	S_\alpha(\omega) \geq k(e) N^{\frac{1}{\nu +1}},
 \end{align}
 which in turn implies equilibration with precision of order $\exp(-N^{\frac{1}{\nu+1}})$.
 To see this, first note that due to stability of entanglement-ergodicity under quasi-local unitaries almost all the steps of the proof of Theorem~\ref{thm:main} transfer directly. 
 In particular the overlap of $\ket{\Psi}_\Lambda$ with energy eigenstates of energy density $e'\in [e-\delta,e+\delta]$ is exponentially small in $N$ due to entanglement-ergodicity. 
 The only step which cannot be transferred is the argument that the overlap with energy eigenstates whose energy density differs by more than $\delta$ from $e$ is exponentially small. 
 Indeed, the best general result for quasi-local states known to us, from Ref.~\cite{Anshu2016}, shows that this overlap is only sub-exponentially small. 
 More precisely, the following theorem holds:
 \begin{theorem}[\cite{Anshu2016}]
 Let $H$ be a strictly local Hamiltonian on a $\nu$-dimensional lattice with $N$ sites, let $\rho$ be a quantum state with correlation length $\xi>0$ and $\langle H\rangle_\rho=  {\rm tr}(H \rho)$ be the average energy of $\rho$.
 Furthermore denote by  $\Pi_{X}$ the projector onto all energy eigenstates with energy in the set $X\subset \mathbb{R}$. For $a\geq (2^{O(\nu)}/N \xi )^{1/2}$, it holds that
 \begin{eqnarray}
 	{\rm tr} (\rho \Pi_{[\langle H\rangle_\rho+Na,\infty]})
 	\leq O(\xi)
 	\exp\left(
 	-\frac{(N a^2 \xi)^{1/(\nu+1)}}{O(1) \nu \xi}
 	\right)
 \end{eqnarray}
 and
 \begin{eqnarray}
 	{\rm tr} (\rho \Pi_{[0, \langle H\rangle_\rho-Na]} ) \leq O(\xi) \exp\left(
 	-\frac{(N a^2 \xi)^{1/(\nu+1)}}{O(1) \nu \xi}
 	\right).
 	\end{eqnarray}
 \end{theorem}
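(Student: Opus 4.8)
The plan is to establish both tail bounds at once from a single symmetric moment estimate, proceeding by the method of moments rather than by a direct free-energy estimate, since the non-Gaussian, stretched-exponential shape of the bound indicates that the tail is governed by the growth of high moments of the energy. First I would center the Hamiltonian, writing $\bar H := H - \langle H\rangle_\rho = \sum_x \bar h_x$ with $\bar h_x := h_x - \langle h_x\rangle_\rho$, so that each $\bar h_x$ is local, uniformly bounded, and has vanishing expectation in $\rho$. For any even integer $p$ the operator inequality $\Pi_{[\langle H\rangle_\rho + Na,\infty)} \leq (Na)^{-p}\,\bar H^{\,p}$ holds for either one-sided tail, because $\bar H^{\,p}\geq 0$ everywhere and exceeds $(Na)^p$ on the relevant spectral subspace. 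Taking $\tr(\rho\,\cdot)$ gives the Markov bound
\begin{align}
\tr\!\left(\rho\,\Pi_{[\langle H\rangle_\rho + Na,\infty)}\right) \leq \frac{\tr(\rho\,\bar H^{\,p})}{(Na)^p},\qquad \tr(\rho\,\bar H^{\,p}) = \sum_{x_1,\dots,x_p}\tr\!\left(\rho\,\bar h_{x_1}\cdots\bar h_{x_p}\right),
\end{align}
and the same expression dominates the lower tail by symmetry of $\bar H^{\,p}$. The whole problem is thus reduced to bounding the ordered $p$-point functions $\tr(\rho\,\bar h_{x_1}\cdots\bar h_{x_p})$ and summing them over the lattice.

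The key input is the finite correlation length $\xi$: connected correlators of local observables decay exponentially in their separation. I would expand each ordered $p$-point function into connected contributions indexed by partitions of $\{x_1,\dots,x_p\}$ into clusters, bound the connected part of each cluster by $O(1)^{|\mathrm{cluster}|}$ times the exponential decay $\e^{-\mathrm{diam}/\xi}$, and then sum over all site configurations. Here the geometry of the $\nu$-dimensional lattice enters decisively: a tightly packed cluster of $k$ sites has diameter at least of order $k^{1/\nu}$, while the number of connected clusters anchored at a given site grows only geometrically in $k$. Carrying out this bookkeeping yields a bound on $\tr(\rho\,\bar H^{\,p})$ that grows faster than the Gaussian prediction $\sim (N\xi)^{p/2}(p-1)!!$, precisely because large single clusters are not Gaussian-suppressed; the excess growth is controlled by the packing constraint $\mathrm{diam}\gtrsim k^{1/\nu}$.

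Finally, I would substitute the moment bound into the Markov inequality and optimize over the (even) moment order $p$. The competition between the factor $(Na)^{-p}$, the combinatorial growth of the moments, and the geometric packing constraint $\mathrm{diam}\gtrsim p^{1/\nu}$ is balanced at an optimal $p^\ast$ of order $(Na^2\xi)^{1/(\nu+1)}$, which reproduces the stated decay $\exp\!\big(-(Na^2\xi)^{1/(\nu+1)}/(O(1)\,\nu\,\xi)\big)$; the hypothesis $a\geq (2^{O(\nu)}/(N\xi))^{1/2}$ is exactly the condition that $p^\ast$ exceed a constant, so that this optimization is meaningful rather than trivial.

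The main obstacle I anticipate is non-commutativity: the operators $\bar h_{x_1},\dots,\bar h_{x_p}$ do not commute, so the familiar commutative cluster (Möbius) expansion into connected parts must be adapted to \emph{ordered} products, and one must verify that the clustering hypothesis on $\rho$ genuinely controls these ordered multi-point functions uniformly in the ordering. Getting the combinatorics of connected clusters on the $\nu$-dimensional lattice to produce exactly the exponent $1/(\nu+1)$ upon optimization—rather than a weaker power—is the technical heart of the argument, and is where the interplay between the lattice dimension, the packing of $p$ points, and the exponential decay length $\xi$ must be tracked with care.
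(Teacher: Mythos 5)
First, a point of orientation: the paper does not prove this theorem at all --- it is imported verbatim from Ref.~\cite{Anshu2016} and used as a black box in \ref*{app:state_quasilocal} --- so your attempt has to be measured against Anshu's actual proof. Your opening step is sound: for even $p$ one has the operator inequality $\bar H^{\,p}\geq (Na)^p\bigl(\Pi_{[\langle H\rangle_\rho+Na,\infty)}+\Pi_{[0,\langle H\rangle_\rho-Na]}\bigr)$, so both tails are controlled by the $p$-th centered moment. The genuine gap is the step you yourself flag as the ``main obstacle'' and then assume away: that a state with correlation length $\xi$ admits a cluster expansion of the \emph{ordered} $p$-point functions $\tr(\rho\,\bar h_{x_1}\cdots\bar h_{x_p})$ into connected parts bounded by $O(1)^{k}\,\e^{-\mathrm{diam}/\xi}$ per $k$-site cluster, uniformly in $p$. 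The hypothesis of the theorem only controls \emph{bipartite} connected correlators, $|\tr(\rho\,AB)-\tr(\rho A)\tr(\rho B)|\lesssim\norm{A}\,\norm{B}\,\e^{-\dist(A,B)/\xi}$. Exponential two-point clustering does not, for general quantum states, yield bounds on higher-order connected correlators (cumulants) with constants uniform in the order; upgrading the former to the latter is not a technicality but an unresolved step, and it is precisely the reason this theorem is not proven by the method of moments. Your proof stands or falls with this unproven input.

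Second, even granting the cluster expansion, your endgame is asserted rather than derived, and the arithmetic does not reproduce the stated bound. With connected $k$-clusters contributing roughly $N\,(Ck\xi)^{\nu(k-1)}$ after summing over positions (this is where your packing constraint actually enters), the $p$-th moment is dominated by the all-pairings term $\sim(pNC\xi^{\nu})^{p/2}$ and the single-big-cluster term $\sim N\,p!\,(Cp\xi)^{\nu(p-1)}$; optimizing the Markov bound over $p$ then produces tail exponents of the form $Na^{2}/\xi^{\nu}$ and $(Na/\xi^{\nu})^{1/(\nu+1)}$ up to constants --- neither of which is the combination $(Na^{2}\xi)^{1/(\nu+1)}/(O(1)\nu\xi)$ in the theorem. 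Note in particular that $\xi$ enters Anshu's exponent to the \emph{first} power inside the root and inversely outside; no rearrangement of the moment computation produces that structure. It is instead the fingerprint of the argument Anshu actually uses: coarse-grain the lattice into blocks of linear size $\ell$, establish concentration across the $\sim N/\ell^{\nu}$ approximately decoupled blocks at rate $\sim Na^{2}/\ell^{\nu}$, pay an error of order $\e^{-\ell/O(\xi)}$ each time the two-point clustering hypothesis is invoked to decouple a block, and balance $Na^{2}/\ell^{\nu}\sim\ell/\xi$ at $\ell^{\ast}\sim(Na^{2}\xi)^{1/(\nu+1)}$. That route needs only the bipartite clustering that is actually assumed, which is why it works; your claim that the moment optimization lands at $p^{\ast}\sim(Na^{2}\xi)^{1/(\nu+1)}$ is reverse-engineered from the answer rather than obtained from the estimate.
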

 Note that this theorem not only bounds the overlap with individual energy eigenstates, as is needed to prove our result, but in fact the
 cumulative overlap with \emph{all} energy eigenstates whose energy density deviates from $e$. 
 Since these are the vast majority of all states in the exponentially large
 Hilbert-space, it seems highly plausible that despite the fact that the
 cumulative overlap with all these states only scales sub-exponentially, the overlap with each individual such energy eigenstate is exponentially small, which would imply exponentially good equilibration also for quasi-local initial states.  
 We thus conjecture that (possibly under mild additional assumptions) it is possible to prove exponentially good equilibration in entanglement-ergodic systems for arbitrary initial states with a finite correlation length. 
 
 \subsection{\ref*{app:eth}: Entanglement ergodicity and eigenstate thermalization}
 In this section we argue that entanglement ergodicity follows from the eigenstate thermalization hypothesis, provided that one assumes that the latter applies to subsystems that contain a finite fraction of the total system. While we believe that our observations make this statement highly plausible, we leave the formal proof as an open problem.
 We thus assume that for any energy density $e$, the reduced state of some subsystem $A$ is given by the reduced state of the Gibbs state with inverse temperature $\beta(e)$ corresponding to the energy density $e$,
 \begin{align}
 	\tr_{A^c}\left(\proj{e}\right) = \tr_{A^c}\left(\frac{\e^{-\beta(e) H}}{Z_{\beta(e)}}\right).
 \end{align}
 We further only consider energy densities for which the thermal state has a finite correlation length and for simplicity assume that the system is translationally invariant.
 In this case the thermal state of the total system at inverse temperature $\beta$ has a finite equilibrium free energy density $f(\beta)$. 
 Let us take the convention that the ground-state of the Hamiltonian has energy zero. In this convention, the total free energy $F_\beta=Nf(\beta)=-\frac{1}{\beta}\log(Z_\beta)$ is negative. 
 Furthermore the ground-state probability $p(\beta)$ is given by 
 \begin{align}
 	p(\beta) = 1/Z = \e^{\beta F_\beta}
 \end{align}
 and hence the R\'enyi-$\infty$ entropy of the total system is given by
 \begin{align}
 S_\infty = - \beta F_\beta,
 \end{align}
 which is a positive, extensive quantity, since the free energy is extensive. 
 We thus correspondingly expect that to leading order, i.e., up to boundary terms, the corresponding entropy of a contiguous subsystem $A$ is given by 
 \begin{align}
 	S_\infty(A) = -\beta|A|f(\beta(e)),
 \end{align}
 which then implies entanglement-ergodicity. This is further supported by the fact that the reduced state on region $A$ is well approximated by the reduced state of 
 the Gibbs state of the Hamiltonian $H$ restricted to $A$ plus a "buffer" region around it -- a phenomenon known as locality of temperature \cite{Kliesch2014}.


\begin{thebibliography}{75}%
\makeatletter
\providecommand \@ifxundefined [1]{%
 \@ifx{#1\undefined}
}%
\providecommand \@ifnum [1]{%
 \ifnum #1\expandafter \@firstoftwo
 \else \expandafter \@secondoftwo
 \fi
}%
\providecommand \@ifx [1]{%
 \ifx #1\expandafter \@firstoftwo
 \else \expandafter \@secondoftwo
 \fi
}%
\providecommand \natexlab [1]{#1}%
\providecommand \enquote  [1]{``#1''}%
\providecommand \bibnamefont  [1]{#1}%
\providecommand \bibfnamefont [1]{#1}%
\providecommand \citenamefont [1]{#1}%
\providecommand \href@noop [0]{\@secondoftwo}%
\providecommand \href [0]{\begingroup \@sanitize@url \@href}%
\providecommand \@href[1]{\@@startlink{#1}\@@href}%
\providecommand \@@href[1]{\endgroup#1\@@endlink}%
\providecommand \@sanitize@url [0]{\catcode `\\12\catcode `\$12\catcode
  `\&12\catcode `\#12\catcode `\^12\catcode `\_12\catcode `\%12\relax}%
\providecommand \@@startlink[1]{}%
\providecommand \@@endlink[0]{}%
\providecommand \url  [0]{\begingroup\@sanitize@url \@url }%
\providecommand \@url [1]{\endgroup\@href {#1}{\urlprefix }}%
\providecommand \urlprefix  [0]{URL }%
\providecommand \Eprint [0]{\href }%
\providecommand \doibase [0]{http://dx.doi.org/}%
\providecommand \selectlanguage [0]{\@gobble}%
\providecommand \bibinfo  [0]{\@secondoftwo}%
\providecommand \bibfield  [0]{\@secondoftwo}%
\providecommand \translation [1]{[#1]}%
\providecommand \BibitemOpen [0]{}%
\providecommand \bibitemStop [0]{}%
\providecommand \bibitemNoStop [0]{.\EOS\space}%
\providecommand \EOS [0]{\spacefactor3000\relax}%
\providecommand \BibitemShut  [1]{\csname bibitem#1\endcsname}%
\let\auto@bib@innerbib\@empty
\bibitem [{\citenamefont {Tasaki}(1998)}]{Tasaki1998}%
  \BibitemOpen
  \bibfield  {author} {\bibinfo {author} {\bibfnamefont {H.}~\bibnamefont
  {Tasaki}},\ }\href {\doibase 10.1103/PhysRevLett.80.1373} {\bibfield
  {journal} {\bibinfo  {journal} {Phys. Rev. Lett.}\ }\textbf {\bibinfo
  {volume} {80}},\ \bibinfo {pages} {1373} (\bibinfo {year}
  {1998})}\BibitemShut {NoStop}%
\bibitem [{\citenamefont {Reimann}(2008)}]{Reimann2008}%
  \BibitemOpen
  \bibfield  {author} {\bibinfo {author} {\bibfnamefont {P.}~\bibnamefont
  {Reimann}},\ }\href {\doibase 10.1103/PhysRevLett.101.190403} {\bibfield
  {journal} {\bibinfo  {journal} {Phys. Rev. Lett.}\ }\textbf {\bibinfo
  {volume} {101}},\ \bibinfo {pages} {190403} (\bibinfo {year} {2008})},\
  \Eprint {http://arxiv.org/abs/0810.3092} {arXiv:0810.3092} \BibitemShut
  {NoStop}%
\bibitem [{\citenamefont {Linden}\ \emph {et~al.}(2009)\citenamefont {Linden},
  \citenamefont {Popescu}, \citenamefont {Short},\ and\ \citenamefont
  {Winter}}]{Linden2009}%
  \BibitemOpen
  \bibfield  {author} {\bibinfo {author} {\bibfnamefont {N.}~\bibnamefont
  {Linden}}, \bibinfo {author} {\bibfnamefont {S.}~\bibnamefont {Popescu}},
  \bibinfo {author} {\bibfnamefont {A.}~\bibnamefont {Short}}, \ and\ \bibinfo
  {author} {\bibfnamefont {A.}~\bibnamefont {Winter}},\ }\href {\doibase
  10.1103/PhysRevE.79.061103} {\bibfield  {journal} {\bibinfo  {journal} {Phys.
  Rev. E}\ }\textbf {\bibinfo {volume} {79}},\ \bibinfo {pages} {61103}
  (\bibinfo {year} {2009})},\ \Eprint {http://arxiv.org/abs/0812.2385}
  {arXiv:0812.2385} \BibitemShut {NoStop}%
\bibitem [{\citenamefont {Goldstein}\ \emph {et~al.}(2010)\citenamefont
  {Goldstein}, \citenamefont {Lebowitz}, \citenamefont {Mastrodonato},
  \citenamefont {Tumulka},\ and\ \citenamefont {Zangh{\`\i}}}]{Goldstein2010}%
  \BibitemOpen
  \bibfield  {author} {\bibinfo {author} {\bibfnamefont {S.}~\bibnamefont
  {Goldstein}}, \bibinfo {author} {\bibfnamefont {J.~L.}\ \bibnamefont
  {Lebowitz}}, \bibinfo {author} {\bibfnamefont {C.}~\bibnamefont
  {Mastrodonato}}, \bibinfo {author} {\bibfnamefont {R.}~\bibnamefont
  {Tumulka}}, \ and\ \bibinfo {author} {\bibfnamefont {N.}~\bibnamefont
  {Zangh{\`\i}}},\ }\href {\doibase 10.1098/rspa.2009.0635} {\bibfield
  {journal} {\bibinfo  {journal} {Proc. Roy. Soc. A}\ }\textbf {\bibinfo
  {volume} {466}},\ \bibinfo {pages} {3203} (\bibinfo {year}
  {2010})}\BibitemShut {NoStop}%
\bibitem [{\citenamefont {Short}(2011)}]{Short2011}%
  \BibitemOpen
  \bibfield  {author} {\bibinfo {author} {\bibfnamefont {A.~J.}\ \bibnamefont
  {Short}},\ }\href {\doibase 10.1088/1367-2630/13/5/053009} {\bibfield
  {journal} {\bibinfo  {journal} {New J. Phys.}\ }\textbf {\bibinfo {volume}
  {13}},\ \bibinfo {pages} {053009} (\bibinfo {year} {2011})}\BibitemShut
  {NoStop}%
\bibitem [{\citenamefont {Reimann}(2012)}]{Reimann2012}%
  \BibitemOpen
  \bibfield  {author} {\bibinfo {author} {\bibfnamefont {P.}~\bibnamefont
  {Reimann}},\ }\href {\doibase 10.1088/0031-8949/86/05/058512} {\bibfield
  {journal} {\bibinfo  {journal} {Phys. Scr.}\ }\textbf {\bibinfo {volume}
  {86}},\ \bibinfo {pages} {58512} (\bibinfo {year} {2012})},\ \Eprint
  {http://arxiv.org/abs/1210.5821} {arXiv:1210.5821} \BibitemShut {NoStop}%
\bibitem [{\citenamefont {Reimann}\ and\ \citenamefont
  {Kastner}(2012)}]{Reimann2012a}%
  \BibitemOpen
  \bibfield  {author} {\bibinfo {author} {\bibfnamefont {P.}~\bibnamefont
  {Reimann}}\ and\ \bibinfo {author} {\bibfnamefont {M.}~\bibnamefont
  {Kastner}},\ }\href {\doibase 10.1088/1367-2630/14/4/043020} {\bibfield
  {journal} {\bibinfo  {journal} {New J. Phys.}\ }\textbf {\bibinfo {volume}
  {14}},\ \bibinfo {pages} {43020} (\bibinfo {year} {2012})},\ \Eprint
  {http://arxiv.org/abs/1202.2768} {arXiv:1202.2768} \BibitemShut {NoStop}%
\bibitem [{\citenamefont {Short}\ and\ \citenamefont
  {Farrelly}(2012)}]{Short2012}%
  \BibitemOpen
  \bibfield  {author} {\bibinfo {author} {\bibfnamefont {A.~J.}\ \bibnamefont
  {Short}}\ and\ \bibinfo {author} {\bibfnamefont {T.~C.}\ \bibnamefont
  {Farrelly}},\ }\href {\doibase 10.1088/1367-2630/14/1/013063} {\bibfield
  {journal} {\bibinfo  {journal} {New J. Phys.}\ }\textbf {\bibinfo {volume}
  {14}},\ \bibinfo {pages} {013063} (\bibinfo {year} {2012})}\BibitemShut
  {NoStop}%
\bibitem [{\citenamefont {Masanes}\ \emph {et~al.}(2013)\citenamefont
  {Masanes}, \citenamefont {Roncaglia},\ and\ \citenamefont
  {Ac{\'{i}}n}}]{Masanes2013}%
  \BibitemOpen
  \bibfield  {author} {\bibinfo {author} {\bibfnamefont {L.}~\bibnamefont
  {Masanes}}, \bibinfo {author} {\bibfnamefont {A.~J.}\ \bibnamefont
  {Roncaglia}}, \ and\ \bibinfo {author} {\bibfnamefont {A.}~\bibnamefont
  {Ac{\'{i}}n}},\ }\href {\doibase 10.1103/PhysRevE.87.032137} {\bibfield
  {journal} {\bibinfo  {journal} {Phys. Rev. E}\ }\textbf {\bibinfo {volume}
  {87}},\ \bibinfo {pages} {32137} (\bibinfo {year} {2013})},\ \Eprint
  {http://arxiv.org/abs/1108.0374} {arXiv:1108.0374} \BibitemShut {NoStop}%
\bibitem [{\citenamefont {Gogolin}\ and\ \citenamefont
  {Eisert}(2016)}]{Gogolin2016}%
  \BibitemOpen
  \bibfield  {author} {\bibinfo {author} {\bibfnamefont {C.}~\bibnamefont
  {Gogolin}}\ and\ \bibinfo {author} {\bibfnamefont {J.}~\bibnamefont
  {Eisert}},\ }\href {\doibase 10.1088/0034-4885/79/5/056001} {\bibfield
  {journal} {\bibinfo  {journal} {Rep. Prog. Phys.}\ }\textbf {\bibinfo
  {volume} {79}},\ \bibinfo {pages} {56001} (\bibinfo {year} {2016})},\ \Eprint
  {http://arxiv.org/abs/1503.07538} {arXiv:1503.07538} \BibitemShut {NoStop}%
\bibitem [{\citenamefont {{Eisert}}\ \emph {et~al.}(2015)\citenamefont
  {{Eisert}}, \citenamefont {{Friesdorf}},\ and\ \citenamefont
  {{Gogolin}}}]{EisFriGog15}%
  \BibitemOpen
  \bibfield  {author} {\bibinfo {author} {\bibfnamefont {J.}~\bibnamefont
  {{Eisert}}}, \bibinfo {author} {\bibfnamefont {M.}~\bibnamefont
  {{Friesdorf}}}, \ and\ \bibinfo {author} {\bibfnamefont {C.}~\bibnamefont
  {{Gogolin}}},\ }\href {\doibase 10.1038/nphys3215} {\bibfield  {journal}
  {\bibinfo  {journal} {Nature Phys.}\ }\textbf {\bibinfo {volume} {11}},\
  \bibinfo {pages} {124} (\bibinfo {year} {2015})},\ \Eprint
  {http://arxiv.org/abs/1408.5148} {arXiv:1408.5148} \BibitemShut {NoStop}%
\bibitem [{\citenamefont {Balz}\ and\ \citenamefont
  {Reimann}(2017)}]{PhysRevLett.118.190601}%
  \BibitemOpen
  \bibfield  {author} {\bibinfo {author} {\bibfnamefont {B.~N.}\ \bibnamefont
  {Balz}}\ and\ \bibinfo {author} {\bibfnamefont {P.}~\bibnamefont {Reimann}},\
  }\href {\doibase 10.1103/PhysRevLett.118.190601} {\bibfield  {journal}
  {\bibinfo  {journal} {Phys. Rev. Lett.}\ }\textbf {\bibinfo {volume} {118}},\
  \bibinfo {pages} {190601} (\bibinfo {year} {2017})}\BibitemShut {NoStop}%
\bibitem [{\citenamefont {Bloch}\ \emph {et~al.}(2008)\citenamefont {Bloch},
  \citenamefont {Dalibard},\ and\ \citenamefont {Zwerger}}]{Bloch2008}%
  \BibitemOpen
  \bibfield  {author} {\bibinfo {author} {\bibfnamefont {I.}~\bibnamefont
  {Bloch}}, \bibinfo {author} {\bibfnamefont {J.}~\bibnamefont {Dalibard}}, \
  and\ \bibinfo {author} {\bibfnamefont {W.}~\bibnamefont {Zwerger}},\ }\href
  {\doibase 10.1103/RevModPhys.80.885} {\bibfield  {journal} {\bibinfo
  {journal} {Rev. Mod. Phys.}\ }\textbf {\bibinfo {volume} {80}},\ \bibinfo
  {pages} {885} (\bibinfo {year} {2008})}\BibitemShut {NoStop}%
\bibitem [{\citenamefont {Bloch}\ \emph {et~al.}(2012)\citenamefont {Bloch},
  \citenamefont {Dalibard},\ and\ \citenamefont {Nascimbene}}]{Bloch2012}%
  \BibitemOpen
  \bibfield  {author} {\bibinfo {author} {\bibfnamefont {I.}~\bibnamefont
  {Bloch}}, \bibinfo {author} {\bibfnamefont {J.}~\bibnamefont {Dalibard}}, \
  and\ \bibinfo {author} {\bibfnamefont {S.}~\bibnamefont {Nascimbene}},\
  }\href {\doibase 10.1038/nphys2259} {\bibfield  {journal} {\bibinfo
  {journal} {Nature Phys.}\ }\textbf {\bibinfo {volume} {8}},\ \bibinfo {pages}
  {267} (\bibinfo {year} {2012})}\BibitemShut {NoStop}%
\bibitem [{\citenamefont {Schneider}\ \emph {et~al.}(2012)\citenamefont
  {Schneider}, \citenamefont {Hackerm{\"u}ller}, \citenamefont {Ronzheimer},
  \citenamefont {Will}, \citenamefont {Braun}, \citenamefont {Best},
  \citenamefont {Bloch}, \citenamefont {Demler}, \citenamefont {Mandt},
  \citenamefont {Rasch},\ and\ \citenamefont {Rosch}}]{Schneider2012}%
  \BibitemOpen
  \bibfield  {author} {\bibinfo {author} {\bibfnamefont {U.}~\bibnamefont
  {Schneider}}, \bibinfo {author} {\bibfnamefont {L.}~\bibnamefont
  {Hackerm{\"u}ller}}, \bibinfo {author} {\bibfnamefont {J.~P.}\ \bibnamefont
  {Ronzheimer}}, \bibinfo {author} {\bibfnamefont {S.}~\bibnamefont {Will}},
  \bibinfo {author} {\bibfnamefont {S.}~\bibnamefont {Braun}}, \bibinfo
  {author} {\bibfnamefont {T.}~\bibnamefont {Best}}, \bibinfo {author}
  {\bibfnamefont {I.}~\bibnamefont {Bloch}}, \bibinfo {author} {\bibfnamefont
  {E.}~\bibnamefont {Demler}}, \bibinfo {author} {\bibfnamefont
  {S.}~\bibnamefont {Mandt}}, \bibinfo {author} {\bibfnamefont
  {D.}~\bibnamefont {Rasch}}, \ and\ \bibinfo {author} {\bibfnamefont
  {A.}~\bibnamefont {Rosch}},\ }\href {\doibase 10.1038/nphys2205} {\bibfield
  {journal} {\bibinfo  {journal} {Nature Phys.}\ }\textbf {\bibinfo {volume}
  {8}},\ \bibinfo {pages} {213} (\bibinfo {year} {2012})}\BibitemShut {NoStop}%
\bibitem [{\citenamefont {Trotzky}\ \emph {et~al.}(2012)\citenamefont
  {Trotzky}, \citenamefont {Chen}, \citenamefont {Flesch}, \citenamefont
  {McCulloch}, \citenamefont {Schollw\"ock}, \citenamefont {Eisert},\ and\
  \citenamefont {Bloch}}]{Trotzky2012}%
  \BibitemOpen
  \bibfield  {author} {\bibinfo {author} {\bibfnamefont {S.}~\bibnamefont
  {Trotzky}}, \bibinfo {author} {\bibfnamefont {Y.-A.}\ \bibnamefont {Chen}},
  \bibinfo {author} {\bibfnamefont {A.}~\bibnamefont {Flesch}}, \bibinfo
  {author} {\bibfnamefont {I.~P.}\ \bibnamefont {McCulloch}}, \bibinfo {author}
  {\bibfnamefont {U.}~\bibnamefont {Schollw\"ock}}, \bibinfo {author}
  {\bibfnamefont {J.}~\bibnamefont {Eisert}}, \ and\ \bibinfo {author}
  {\bibfnamefont {I.}~\bibnamefont {Bloch}},\ }\href {\doibase
  10.1038/nphys2232} {\bibfield  {journal} {\bibinfo  {journal} {Nature Phys.}\
  }\textbf {\bibinfo {volume} {8}},\ \bibinfo {pages} {325} (\bibinfo {year}
  {2012})}\BibitemShut {NoStop}%
\bibitem [{\citenamefont {Braun}\ \emph {et~al.}(2015)\citenamefont {Braun},
  \citenamefont {Friesdorf}, \citenamefont {Hodgman}, \citenamefont
  {Schreiber}, \citenamefont {Ronzheimer}, \citenamefont {Riera}, \citenamefont
  {del Rey}, \citenamefont {Bloch}, \citenamefont {Eisert},\ and\ \citenamefont
  {Schneider}}]{Braun2015}%
  \BibitemOpen
  \bibfield  {author} {\bibinfo {author} {\bibfnamefont {S.}~\bibnamefont
  {Braun}}, \bibinfo {author} {\bibfnamefont {M.}~\bibnamefont {Friesdorf}},
  \bibinfo {author} {\bibfnamefont {S.~S.}\ \bibnamefont {Hodgman}}, \bibinfo
  {author} {\bibfnamefont {M.}~\bibnamefont {Schreiber}}, \bibinfo {author}
  {\bibfnamefont {J.~P.~p.}\ \bibnamefont {Ronzheimer}}, \bibinfo {author}
  {\bibfnamefont {A.}~\bibnamefont {Riera}}, \bibinfo {author} {\bibfnamefont
  {M.}~\bibnamefont {del Rey}}, \bibinfo {author} {\bibfnamefont
  {I.}~\bibnamefont {Bloch}}, \bibinfo {author} {\bibfnamefont
  {J.}~\bibnamefont {Eisert}}, \ and\ \bibinfo {author} {\bibfnamefont
  {U.}~\bibnamefont {Schneider}},\ }\href {\doibase 10.1073/pnas.1408861112}
  {\bibfield  {journal} {\bibinfo  {journal} {Proc. Natl. Ac. Sc.}\ }\textbf
  {\bibinfo {volume} {112}},\ \bibinfo {pages} {3641} (\bibinfo {year}
  {2015})}\BibitemShut {NoStop}%
\bibitem [{\citenamefont {Schreiber}\ \emph {et~al.}(2015)\citenamefont
  {Schreiber}, \citenamefont {Hodgman}, \citenamefont {Bordia}, \citenamefont
  {L{\"u}schen}, \citenamefont {Fischer}, \citenamefont {Vosk}, \citenamefont
  {Altman}, \citenamefont {Schneider},\ and\ \citenamefont
  {Bloch}}]{Schreiber2015}%
  \BibitemOpen
  \bibfield  {author} {\bibinfo {author} {\bibfnamefont {M.}~\bibnamefont
  {Schreiber}}, \bibinfo {author} {\bibfnamefont {S.~S.}\ \bibnamefont
  {Hodgman}}, \bibinfo {author} {\bibfnamefont {P.}~\bibnamefont {Bordia}},
  \bibinfo {author} {\bibfnamefont {H.~P.}\ \bibnamefont {L{\"u}schen}},
  \bibinfo {author} {\bibfnamefont {M.~H.}\ \bibnamefont {Fischer}}, \bibinfo
  {author} {\bibfnamefont {R.}~\bibnamefont {Vosk}}, \bibinfo {author}
  {\bibfnamefont {E.}~\bibnamefont {Altman}}, \bibinfo {author} {\bibfnamefont
  {U.}~\bibnamefont {Schneider}}, \ and\ \bibinfo {author} {\bibfnamefont
  {I.}~\bibnamefont {Bloch}},\ }\href {\doibase 10.1126/science.aaa7432}
  {\bibfield  {journal} {\bibinfo  {journal} {Science}\ }\textbf {\bibinfo
  {volume} {349}},\ \bibinfo {pages} {842} (\bibinfo {year}
  {2015})}\BibitemShut {NoStop}%
\bibitem [{\citenamefont {Nandkishore}\ and\ \citenamefont
  {Huse}(2015)}]{Nandkishore2015}%
  \BibitemOpen
  \bibfield  {author} {\bibinfo {author} {\bibfnamefont {R.}~\bibnamefont
  {Nandkishore}}\ and\ \bibinfo {author} {\bibfnamefont {D.~A.}\ \bibnamefont
  {Huse}},\ }\href {\doibase 10.1146/annurev-conmatphys-031214-014726}
  {\bibfield  {journal} {\bibinfo  {journal} {Ann. Rev. Cond. Matter Phys.}\
  }\textbf {\bibinfo {volume} {6}},\ \bibinfo {pages} {15} (\bibinfo {year}
  {2015})}\BibitemShut {NoStop}%
\bibitem [{\citenamefont {Serbyn}\ \emph {et~al.}(2014)\citenamefont {Serbyn},
  \citenamefont {Papi\ifmmode~\acute{c}\else \'{c}\fi{}},\ and\ \citenamefont
  {Abanin}}]{Serbyn2014}%
  \BibitemOpen
  \bibfield  {author} {\bibinfo {author} {\bibfnamefont {M.}~\bibnamefont
  {Serbyn}}, \bibinfo {author} {\bibfnamefont {Z.}~\bibnamefont
  {Papi\ifmmode~\acute{c}\else \'{c}\fi{}}}, \ and\ \bibinfo {author}
  {\bibfnamefont {D.~A.}\ \bibnamefont {Abanin}},\ }\href {\doibase
  10.1103/PhysRevB.90.174302} {\bibfield  {journal} {\bibinfo  {journal} {Phys.
  Rev. B}\ }\textbf {\bibinfo {volume} {90}},\ \bibinfo {pages} {174302}
  (\bibinfo {year} {2014})}\BibitemShut {NoStop}%
\bibitem [{\citenamefont {Deutsch}(1991)}]{Deutsch1991}%
  \BibitemOpen
  \bibfield  {author} {\bibinfo {author} {\bibfnamefont {J.~M.}\ \bibnamefont
  {Deutsch}},\ }\href {\doibase 10.1103/PhysRevA.43.2046} {\bibfield  {journal}
  {\bibinfo  {journal} {Phys. Rev. A}\ }\textbf {\bibinfo {volume} {43}},\
  \bibinfo {pages} {2046} (\bibinfo {year} {1991})}\BibitemShut {NoStop}%
\bibitem [{\citenamefont {Srednicki}(1994)}]{Srednicki1994}%
  \BibitemOpen
  \bibfield  {author} {\bibinfo {author} {\bibfnamefont {M.}~\bibnamefont
  {Srednicki}},\ }\href {\doibase 10.1103/PhysRevE.50.888} {\bibfield
  {journal} {\bibinfo  {journal} {Phys. Rev. E}\ }\textbf {\bibinfo {volume}
  {50}},\ \bibinfo {pages} {888} (\bibinfo {year} {1994})}\BibitemShut
  {NoStop}%
\bibitem [{\citenamefont {Rigol}\ \emph {et~al.}(2008)\citenamefont {Rigol},
  \citenamefont {Dunjko},\ and\ \citenamefont {Olshanii}}]{Rigol2008}%
  \BibitemOpen
  \bibfield  {author} {\bibinfo {author} {\bibfnamefont {M.}~\bibnamefont
  {Rigol}}, \bibinfo {author} {\bibfnamefont {V.}~\bibnamefont {Dunjko}}, \
  and\ \bibinfo {author} {\bibfnamefont {M.}~\bibnamefont {Olshanii}},\ }\href
  {\doibase 10.1038/nature06838} {\bibfield  {journal} {\bibinfo  {journal}
  {Nature}\ }\textbf {\bibinfo {volume} {452}},\ \bibinfo {pages} {854}
  (\bibinfo {year} {2008})}\BibitemShut {NoStop}%
\bibitem [{\citenamefont {Polkovnikov}\ \emph {et~al.}(2011)\citenamefont
  {Polkovnikov}, \citenamefont {Sengupta}, \citenamefont {Silva},\ and\
  \citenamefont {Vengalattore}}]{Polkovnikov2011}%
  \BibitemOpen
  \bibfield  {author} {\bibinfo {author} {\bibfnamefont {A.}~\bibnamefont
  {Polkovnikov}}, \bibinfo {author} {\bibfnamefont {K.}~\bibnamefont
  {Sengupta}}, \bibinfo {author} {\bibfnamefont {A.}~\bibnamefont {Silva}}, \
  and\ \bibinfo {author} {\bibfnamefont {M.}~\bibnamefont {Vengalattore}},\
  }\href {\doibase 10.1103/RevModPhys.83.863} {\bibfield  {journal} {\bibinfo
  {journal} {Rev. Mod. Phys.}\ }\textbf {\bibinfo {volume} {83}},\ \bibinfo
  {pages} {863} (\bibinfo {year} {2011})}\BibitemShut {NoStop}%
\bibitem [{\citenamefont {D'Alessio}\ \emph {et~al.}(2016)\citenamefont
  {D'Alessio}, \citenamefont {Kafri}, \citenamefont {Polkovnikov},\ and\
  \citenamefont {Rigol}}]{Rigol2016}%
  \BibitemOpen
  \bibfield  {author} {\bibinfo {author} {\bibfnamefont {L.}~\bibnamefont
  {D'Alessio}}, \bibinfo {author} {\bibfnamefont {Y.}~\bibnamefont {Kafri}},
  \bibinfo {author} {\bibfnamefont {A.}~\bibnamefont {Polkovnikov}}, \ and\
  \bibinfo {author} {\bibfnamefont {M.}~\bibnamefont {Rigol}},\ }\href
  {\doibase 10.1080/00018732.2016.1198134} {\bibfield  {journal} {\bibinfo
  {journal} {Adv. Phys.}\ }\textbf {\bibinfo {volume} {65}},\ \bibinfo {pages}
  {239} (\bibinfo {year} {2016})}\BibitemShut {NoStop}%
\bibitem [{sym()}]{symmetryNote}%
  \BibitemOpen
  \href@noop {} {}\bibinfo {note} {In case of global symmetries or
  super-selection sectors, each sector should be considered
  separately.}\BibitemShut {Stop}%
\bibitem [{\citenamefont {Farrelly}\ \emph {et~al.}(2017)\citenamefont
  {Farrelly}, \citenamefont {Brand{\~a}o},\ and\ \citenamefont
  {Cramer}}]{Farrelly2016}%
  \BibitemOpen
  \bibfield  {author} {\bibinfo {author} {\bibfnamefont {T.}~\bibnamefont
  {Farrelly}}, \bibinfo {author} {\bibfnamefont {F.~G.}\ \bibnamefont
  {Brand{\~a}o}}, \ and\ \bibinfo {author} {\bibfnamefont {M.}~\bibnamefont
  {Cramer}},\ }\href {\doibase 10.1103/PhysRevLett.118.140601} {\bibfield
  {journal} {\bibinfo  {journal} {Phys. Rev. Lett.}\ }\textbf {\bibinfo
  {volume} {118}},\ \bibinfo {pages} {140601} (\bibinfo {year}
  {2017})}\BibitemShut {NoStop}%
\bibitem [{\citenamefont {Gallego}\ \emph {et~al.}(2018)\citenamefont
  {Gallego}, \citenamefont {Wilming}, \citenamefont {Eisert},\ and\
  \citenamefont {Gogolin}}]{Gallego2017}%
  \BibitemOpen
  \bibfield  {author} {\bibinfo {author} {\bibfnamefont {R.}~\bibnamefont
  {Gallego}}, \bibinfo {author} {\bibfnamefont {H.}~\bibnamefont {Wilming}},
  \bibinfo {author} {\bibfnamefont {J.}~\bibnamefont {Eisert}}, \ and\ \bibinfo
  {author} {\bibfnamefont {C.}~\bibnamefont {Gogolin}},\ }\href {\doibase
  10.1103/PhysRevA.98.022135} {\bibfield  {journal} {\bibinfo  {journal} {Phys.
  Rev. A}\ }\textbf {\bibinfo {volume} {98}},\ \bibinfo {pages} {022135}
  (\bibinfo {year} {2018})}\BibitemShut {NoStop}%
\bibitem [{\citenamefont {Caux}\ and\ \citenamefont {Mossel}(2011)}]{Caux2010}%
  \BibitemOpen
  \bibfield  {author} {\bibinfo {author} {\bibfnamefont {J.-S.}\ \bibnamefont
  {Caux}}\ and\ \bibinfo {author} {\bibfnamefont {J.}~\bibnamefont {Mossel}},\
  }\href {\doibase 10.1088/1742-5468/2011/02/P02023} {\bibfield  {journal}
  {\bibinfo  {journal} {J. Stat. Mech.}\ ,\ \bibinfo {pages} {P02023}}
  (\bibinfo {year} {2011})}\BibitemShut {NoStop}%
\bibitem [{\citenamefont {Bennett}\ \emph {et~al.}(1996)\citenamefont
  {Bennett}, \citenamefont {Bernstein}, \citenamefont {Popescu},\ and\
  \citenamefont {Schumacher}}]{PhysRevA.53.2046}%
  \BibitemOpen
  \bibfield  {author} {\bibinfo {author} {\bibfnamefont {C.~H.}\ \bibnamefont
  {Bennett}}, \bibinfo {author} {\bibfnamefont {H.~J.}\ \bibnamefont
  {Bernstein}}, \bibinfo {author} {\bibfnamefont {S.}~\bibnamefont {Popescu}},
  \ and\ \bibinfo {author} {\bibfnamefont {B.}~\bibnamefont {Schumacher}},\
  }\href {\doibase 10.1103/PhysRevA.53.2046} {\bibfield  {journal} {\bibinfo
  {journal} {Phys. Rev. A}\ }\textbf {\bibinfo {volume} {53}},\ \bibinfo
  {pages} {2046} (\bibinfo {year} {1996})}\BibitemShut {NoStop}%
\bibitem [{\citenamefont {Garrison}\ and\ \citenamefont
  {Grover}(2018)}]{Garrison2015}%
  \BibitemOpen
  \bibfield  {author} {\bibinfo {author} {\bibfnamefont {J.~R.}\ \bibnamefont
  {Garrison}}\ and\ \bibinfo {author} {\bibfnamefont {T.}~\bibnamefont
  {Grover}},\ }\href {\doibase 10.1103/PhysRevX.8.021026} {\bibfield  {journal}
  {\bibinfo  {journal} {Phys. Rev. X}\ }\textbf {\bibinfo {volume} {8}},\
  \bibinfo {pages} {021026} (\bibinfo {year} {2018})}\BibitemShut {NoStop}%
\bibitem [{\citenamefont {Nakagawa}\ \emph {et~al.}(2018)\citenamefont
  {Nakagawa}, \citenamefont {Watanabe}, \citenamefont {Fujita},\ and\
  \citenamefont {Sugiura}}]{Nakagawa2018}%
  \BibitemOpen
  \bibfield  {author} {\bibinfo {author} {\bibfnamefont {Y.~O.}\ \bibnamefont
  {Nakagawa}}, \bibinfo {author} {\bibfnamefont {M.}~\bibnamefont {Watanabe}},
  \bibinfo {author} {\bibfnamefont {H.}~\bibnamefont {Fujita}}, \ and\ \bibinfo
  {author} {\bibfnamefont {S.}~\bibnamefont {Sugiura}},\ }\href {\doibase
  10.1038/s41467-018-03883-9} {\bibfield  {journal} {\bibinfo  {journal} {Nat.
  Commun.}\ }\textbf {\bibinfo {volume} {9}} (\bibinfo {year} {2018}),\
  10.1038/s41467-018-03883-9}\BibitemShut {NoStop}%
\bibitem [{\citenamefont {Lu}\ and\ \citenamefont {Grover}(2019)}]{Lu2017}%
  \BibitemOpen
  \bibfield  {author} {\bibinfo {author} {\bibfnamefont {T.-C.}\ \bibnamefont
  {Lu}}\ and\ \bibinfo {author} {\bibfnamefont {T.}~\bibnamefont {Grover}},\
  }\href {\doibase 10.1103/physreve.99.032111} {\bibfield  {journal} {\bibinfo
  {journal} {Phys. Rev. E}\ }\textbf {\bibinfo {volume} {99}} (\bibinfo {year}
  {2019}),\ 10.1103/physreve.99.032111}\BibitemShut {NoStop}%
\bibitem [{\citenamefont {Huang}(2019)}]{Huang2017}%
  \BibitemOpen
  \bibfield  {author} {\bibinfo {author} {\bibfnamefont {Y.}~\bibnamefont
  {Huang}},\ }\href {\doibase 10.1016/j.nuclphysb.2018.09.013} {\bibfield
  {journal} {\bibinfo  {journal} {Nucl. Phys. B}\ }\textbf {\bibinfo {volume}
  {938}},\ \bibinfo {pages} {594} (\bibinfo {year} {2019})}\BibitemShut
  {NoStop}%
\bibitem [{\citenamefont {Vidmar}\ \emph {et~al.}(2017)\citenamefont {Vidmar},
  \citenamefont {Hackl}, \citenamefont {Bianchi},\ and\ \citenamefont
  {Rigol}}]{Vidmar2017}%
  \BibitemOpen
  \bibfield  {author} {\bibinfo {author} {\bibfnamefont {L.}~\bibnamefont
  {Vidmar}}, \bibinfo {author} {\bibfnamefont {L.}~\bibnamefont {Hackl}},
  \bibinfo {author} {\bibfnamefont {E.}~\bibnamefont {Bianchi}}, \ and\
  \bibinfo {author} {\bibfnamefont {M.}~\bibnamefont {Rigol}},\ }\href
  {\doibase 10.1103/PhysRevLett.119.020601} {\bibfield  {journal} {\bibinfo
  {journal} {Phys. Rev. Lett.}\ }\textbf {\bibinfo {volume} {119}},\ \bibinfo
  {pages} {020601} (\bibinfo {year} {2017})}\BibitemShut {NoStop}%
\bibitem [{\citenamefont {Vidmar}\ and\ \citenamefont
  {Rigol}(2017)}]{Vidmar2017a}%
  \BibitemOpen
  \bibfield  {author} {\bibinfo {author} {\bibfnamefont {L.}~\bibnamefont
  {Vidmar}}\ and\ \bibinfo {author} {\bibfnamefont {M.}~\bibnamefont {Rigol}},\
  }\href {\doibase 10.1103/PhysRevLett.119.220603} {\bibfield  {journal}
  {\bibinfo  {journal} {Phys. Rev. Lett.}\ }\textbf {\bibinfo {volume} {119}},\
  \bibinfo {pages} {220603} (\bibinfo {year} {2017})}\BibitemShut {NoStop}%
\bibitem [{\citenamefont {Goold}\ \emph {et~al.}(2015)\citenamefont {Goold},
  \citenamefont {Gogolin}, \citenamefont {Clark}, \citenamefont {Eisert},
  \citenamefont {Scardicchio},\ and\ \citenamefont
  {Silva}}]{PhysRevB.92.180202}%
  \BibitemOpen
  \bibfield  {author} {\bibinfo {author} {\bibfnamefont {J.}~\bibnamefont
  {Goold}}, \bibinfo {author} {\bibfnamefont {C.}~\bibnamefont {Gogolin}},
  \bibinfo {author} {\bibfnamefont {S.~R.}\ \bibnamefont {Clark}}, \bibinfo
  {author} {\bibfnamefont {J.}~\bibnamefont {Eisert}}, \bibinfo {author}
  {\bibfnamefont {A.}~\bibnamefont {Scardicchio}}, \ and\ \bibinfo {author}
  {\bibfnamefont {A.}~\bibnamefont {Silva}},\ }\href {\doibase
  10.1103/PhysRevB.92.180202} {\bibfield  {journal} {\bibinfo  {journal} {Phys.
  Rev. B}\ }\textbf {\bibinfo {volume} {92}},\ \bibinfo {pages} {180202}
  (\bibinfo {year} {2015})}\BibitemShut {NoStop}%
\bibitem [{\citenamefont {Berry}(1985)}]{Berry1985}%
  \BibitemOpen
  \bibfield  {author} {\bibinfo {author} {\bibfnamefont {M.~V.}\ \bibnamefont
  {Berry}},\ }\href {\doibase 10.1098/rspa.1985.0078} {\bibfield  {journal}
  {\bibinfo  {journal} {Proc. Roy. Soc. A}\ }\textbf {\bibinfo {volume}
  {400}},\ \bibinfo {pages} {229} (\bibinfo {year} {1985})}\BibitemShut
  {NoStop}%
\bibitem [{\citenamefont {Prosen}(1999)}]{Prosen1999}%
  \BibitemOpen
  \bibfield  {author} {\bibinfo {author} {\bibfnamefont {T.}~\bibnamefont
  {Prosen}},\ }\href {\doibase 10.1103/PhysRevE.60.3949} {\bibfield  {journal}
  {\bibinfo  {journal} {Phys. Rev. E}\ }\textbf {\bibinfo {volume} {60}},\
  \bibinfo {pages} {3949} (\bibinfo {year} {1999})}\BibitemShut {NoStop}%
\bibitem [{\citenamefont {M{\"u}ller}\ \emph {et~al.}(2004)\citenamefont
  {M{\"u}ller}, \citenamefont {Heusler}, \citenamefont {Braun}, \citenamefont
  {Haake},\ and\ \citenamefont {Altland}}]{Mueller2004}%
  \BibitemOpen
  \bibfield  {author} {\bibinfo {author} {\bibfnamefont {S.}~\bibnamefont
  {M{\"u}ller}}, \bibinfo {author} {\bibfnamefont {S.}~\bibnamefont {Heusler}},
  \bibinfo {author} {\bibfnamefont {P.}~\bibnamefont {Braun}}, \bibinfo
  {author} {\bibfnamefont {F.}~\bibnamefont {Haake}}, \ and\ \bibinfo {author}
  {\bibfnamefont {A.}~\bibnamefont {Altland}},\ }\href {\doibase
  10.1103/PhysRevLett.93.014103} {\bibfield  {journal} {\bibinfo  {journal}
  {Phys. Rev. Lett.}\ }\textbf {\bibinfo {volume} {93}},\ \bibinfo {pages}
  {014103} (\bibinfo {year} {2004})}\BibitemShut {NoStop}%
\bibitem [{\citenamefont {Kollath}\ \emph {et~al.}(2010)\citenamefont
  {Kollath}, \citenamefont {Roux}, \citenamefont {Biroli},\ and\ \citenamefont
  {L{\"a}uchli}}]{Kollath2010}%
  \BibitemOpen
  \bibfield  {author} {\bibinfo {author} {\bibfnamefont {C.}~\bibnamefont
  {Kollath}}, \bibinfo {author} {\bibfnamefont {G.}~\bibnamefont {Roux}},
  \bibinfo {author} {\bibfnamefont {G.}~\bibnamefont {Biroli}}, \ and\ \bibinfo
  {author} {\bibfnamefont {A.~M.}\ \bibnamefont {L{\"a}uchli}},\ }\href
  {\doibase 10.1088/1742-5468/2010/08/P08011} {\bibfield  {journal} {\bibinfo
  {journal} {J. Stat. Mech.}\ }\textbf {\bibinfo {volume} {2010}},\ \bibinfo
  {pages} {P08011} (\bibinfo {year} {2010})}\BibitemShut {NoStop}%
\bibitem [{\citenamefont {Keating}\ \emph {et~al.}(2015)\citenamefont
  {Keating}, \citenamefont {Linden},\ and\ \citenamefont
  {Wells}}]{Keating2015}%
  \BibitemOpen
  \bibfield  {author} {\bibinfo {author} {\bibfnamefont {J.~P.}\ \bibnamefont
  {Keating}}, \bibinfo {author} {\bibfnamefont {N.}~\bibnamefont {Linden}}, \
  and\ \bibinfo {author} {\bibfnamefont {H.~J.}\ \bibnamefont {Wells}},\ }\href
  {\doibase 10.1007/s00220-015-2366-0} {\bibfield  {journal} {\bibinfo
  {journal} {Commun. Math. Phys.}\ }\textbf {\bibinfo {volume} {338}},\
  \bibinfo {pages} {81} (\bibinfo {year} {2015})}\BibitemShut {NoStop}%
\bibitem [{\citenamefont {Kos}\ \emph {et~al.}(2018)\citenamefont {Kos},
  \citenamefont {Ljubotina},\ and\ \citenamefont {Prosen}}]{Kos2017}%
  \BibitemOpen
  \bibfield  {author} {\bibinfo {author} {\bibfnamefont {P.}~\bibnamefont
  {Kos}}, \bibinfo {author} {\bibfnamefont {M.}~\bibnamefont {Ljubotina}}, \
  and\ \bibinfo {author} {\bibfnamefont {T.}~\bibnamefont {Prosen}},\ }\href
  {\doibase 10.1103/physrevx.8.021062} {\bibfield  {journal} {\bibinfo
  {journal} {Phys. Rev. X}\ }\textbf {\bibinfo {volume} {8}} (\bibinfo {year}
  {2018}),\ 10.1103/physrevx.8.021062}\BibitemShut {NoStop}%
\bibitem [{\citenamefont {Lubkin}(1978)}]{Lubkin1978}%
  \BibitemOpen
  \bibfield  {author} {\bibinfo {author} {\bibfnamefont {E.}~\bibnamefont
  {Lubkin}},\ }\href {\doibase 10.1063/1.523763} {\bibfield  {journal}
  {\bibinfo  {journal} {J. Math. Phys.}\ }\textbf {\bibinfo {volume} {19}},\
  \bibinfo {pages} {1028} (\bibinfo {year} {1978})}\BibitemShut {NoStop}%
\bibitem [{\citenamefont {Lloyd}\ and\ \citenamefont
  {Pagels}(1988)}]{Lloyd1988}%
  \BibitemOpen
  \bibfield  {author} {\bibinfo {author} {\bibfnamefont {S.}~\bibnamefont
  {Lloyd}}\ and\ \bibinfo {author} {\bibfnamefont {H.}~\bibnamefont {Pagels}},\
  }\href {\doibase doi.org/10.1016/0003-4916(88)90094-2} {\bibfield  {journal}
  {\bibinfo  {journal} {Ann. Phys.}\ }\textbf {\bibinfo {volume} {188}},\
  \bibinfo {pages} {186} (\bibinfo {year} {1988})}\BibitemShut {NoStop}%
\bibitem [{\citenamefont {Page}(1993)}]{Page1993}%
  \BibitemOpen
  \bibfield  {author} {\bibinfo {author} {\bibfnamefont {D.~N.}\ \bibnamefont
  {Page}},\ }\href {\doibase 10.1103/PhysRevLett.71.1291} {\bibfield  {journal}
  {\bibinfo  {journal} {Phys. Rev. Lett.}\ }\textbf {\bibinfo {volume} {71}},\
  \bibinfo {pages} {1291} (\bibinfo {year} {1993})}\BibitemShut {NoStop}%
\bibitem [{\citenamefont {Nadal}\ \emph {et~al.}(2010)\citenamefont {Nadal},
  \citenamefont {Majumdar},\ and\ \citenamefont {Vergassola}}]{Nadal2010}%
  \BibitemOpen
  \bibfield  {author} {\bibinfo {author} {\bibfnamefont {C.}~\bibnamefont
  {Nadal}}, \bibinfo {author} {\bibfnamefont {S.~N.}\ \bibnamefont {Majumdar}},
  \ and\ \bibinfo {author} {\bibfnamefont {M.}~\bibnamefont {Vergassola}},\
  }\href {\doibase 10.1103/PhysRevLett.104.110501} {\bibfield  {journal}
  {\bibinfo  {journal} {Phys. Rev. Lett.}\ }\textbf {\bibinfo {volume} {104}},\
  \bibinfo {pages} {110501} (\bibinfo {year} {2010})}\BibitemShut {NoStop}%
\bibitem [{\citenamefont {Nadal}\ \emph {et~al.}(2011)\citenamefont {Nadal},
  \citenamefont {Majumdar},\ and\ \citenamefont {Vergassola}}]{Nadal2011}%
  \BibitemOpen
  \bibfield  {author} {\bibinfo {author} {\bibfnamefont {C.}~\bibnamefont
  {Nadal}}, \bibinfo {author} {\bibfnamefont {S.~N.}\ \bibnamefont {Majumdar}},
  \ and\ \bibinfo {author} {\bibfnamefont {M.}~\bibnamefont {Vergassola}},\
  }\href {\doibase 10.1007/s10955-010-0108-4} {\bibfield  {journal} {\bibinfo
  {journal} {Journal of Statistical Physics}\ }\textbf {\bibinfo {volume}
  {142}},\ \bibinfo {pages} {403} (\bibinfo {year} {2011})}\BibitemShut
  {NoStop}%
\bibitem [{sym()}]{HaarNote}%
  \BibitemOpen
  \href@noop {} {}\bibinfo {note} {While it is immediate that each eigenstate of a random Hamiltonian fulfills a volume law with high probability, the distribution of all the basis states is not independent. It is thus not immediate that the probability that all $d^N$ eigenstates fulfill a volume law is also arbitrarily close to unity for large systems. However, the probability for each of them not to follow a volume law is exponentially small in $d^N$ according to Ref.~\cite{Nadal2010,Nadal2011}, while there are only $d^N$ state, which indicates that even this latter probability is very large.}\BibitemShut {Stop}%
\bibitem [{\citenamefont {Eisert}\ \emph {et~al.}(2010)\citenamefont {Eisert},
  \citenamefont {Cramer},\ and\ \citenamefont {Plenio}}]{AreaReview}%
  \BibitemOpen
  \bibfield  {author} {\bibinfo {author} {\bibfnamefont {J.}~\bibnamefont
  {Eisert}}, \bibinfo {author} {\bibfnamefont {M.}~\bibnamefont {Cramer}}, \
  and\ \bibinfo {author} {\bibfnamefont {M.~B.}\ \bibnamefont {Plenio}},\
  }\href {\doibase 10.1103/RevModPhys.82.277} {\bibfield  {journal} {\bibinfo
  {journal} {Rev. Mod. Phys.}\ }\textbf {\bibinfo {volume} {82}},\ \bibinfo
  {pages} {277} (\bibinfo {year} {2010})}\BibitemShut {NoStop}%
\bibitem [{\citenamefont {Bauer}\ and\ \citenamefont {Nayak}(2013)}]{Bauer}%
  \BibitemOpen
  \bibfield  {author} {\bibinfo {author} {\bibfnamefont {B.}~\bibnamefont
  {Bauer}}\ and\ \bibinfo {author} {\bibfnamefont {C.}~\bibnamefont {Nayak}},\
  }\href {\doibase 10.1088/1742-5468/2013/09/P09005} {\bibfield  {journal}
  {\bibinfo  {journal} {J. Stat. Mech.}\ }\textbf {\bibinfo {volume} {2013}},\
  \bibinfo {pages} {P09005} (\bibinfo {year} {2013})}\BibitemShut {NoStop}%
\bibitem [{\citenamefont {Friesdorf}\ \emph {et~al.}(2015)\citenamefont
  {Friesdorf}, \citenamefont {Werner}, \citenamefont {Brown}, \citenamefont
  {Scholz},\ and\ \citenamefont {Eisert}}]{1409.1252}%
  \BibitemOpen
  \bibfield  {author} {\bibinfo {author} {\bibfnamefont {M.}~\bibnamefont
  {Friesdorf}}, \bibinfo {author} {\bibfnamefont {A.~H.}\ \bibnamefont
  {Werner}}, \bibinfo {author} {\bibfnamefont {W.}~\bibnamefont {Brown}},
  \bibinfo {author} {\bibfnamefont {V.~B.}\ \bibnamefont {Scholz}}, \ and\
  \bibinfo {author} {\bibfnamefont {J.}~\bibnamefont {Eisert}},\ }\href
  {\doibase 10.1103/PhysRevLett.114.170505} {\bibfield  {journal} {\bibinfo
  {journal} {Phys. Rev. Lett.}\ }\textbf {\bibinfo {volume} {114}},\ \bibinfo
  {pages} {170505} (\bibinfo {year} {2015})}\BibitemShut {NoStop}%
\bibitem [{\citenamefont {Anshu}(2016)}]{Anshu2016}%
  \BibitemOpen
  \bibfield  {author} {\bibinfo {author} {\bibfnamefont {A.}~\bibnamefont
  {Anshu}},\ }\href {\doibase 10.1088/1367-2630/18/8/083011} {\bibfield
  {journal} {\bibinfo  {journal} {New J. Phys.}\ }\textbf {\bibinfo {volume}
  {18}},\ \bibinfo {pages} {083011} (\bibinfo {year} {2016})}\BibitemShut
  {NoStop}%
\bibitem [{\citenamefont {Bernien}\ \emph {et~al.}(2017)\citenamefont
  {Bernien}, \citenamefont {Schwartz}, \citenamefont {Keesling}, \citenamefont
  {Levine}, \citenamefont {Omran}, \citenamefont {Pichler}, \citenamefont
  {Choi}, \citenamefont {Zibrov}, \citenamefont {Endres}, \citenamefont
  {Greiner}, \citenamefont {Vuleti{\'{c}}},\ and\ \citenamefont
  {Lukin}}]{Bernien2017}%
  \BibitemOpen
  \bibfield  {author} {\bibinfo {author} {\bibfnamefont {H.}~\bibnamefont
  {Bernien}}, \bibinfo {author} {\bibfnamefont {S.}~\bibnamefont {Schwartz}},
  \bibinfo {author} {\bibfnamefont {A.}~\bibnamefont {Keesling}}, \bibinfo
  {author} {\bibfnamefont {H.}~\bibnamefont {Levine}}, \bibinfo {author}
  {\bibfnamefont {A.}~\bibnamefont {Omran}}, \bibinfo {author} {\bibfnamefont
  {H.}~\bibnamefont {Pichler}}, \bibinfo {author} {\bibfnamefont
  {S.}~\bibnamefont {Choi}}, \bibinfo {author} {\bibfnamefont {A.~S.}\
  \bibnamefont {Zibrov}}, \bibinfo {author} {\bibfnamefont {M.}~\bibnamefont
  {Endres}}, \bibinfo {author} {\bibfnamefont {M.}~\bibnamefont {Greiner}},
  \bibinfo {author} {\bibfnamefont {V.}~\bibnamefont {Vuleti{\'{c}}}}, \ and\
  \bibinfo {author} {\bibfnamefont {M.~D.}\ \bibnamefont {Lukin}},\ }\href
  {\doibase 10.1038/nature24622} {\bibfield  {journal} {\bibinfo  {journal}
  {Nature}\ }\textbf {\bibinfo {volume} {551}},\ \bibinfo {pages} {579}
  (\bibinfo {year} {2017})}\BibitemShut {NoStop}%
\bibitem [{\citenamefont {Turner}\ \emph
  {et~al.}(2018{\natexlab{a}})\citenamefont {Turner}, \citenamefont
  {Michailidis}, \citenamefont {Abanin}, \citenamefont {Serbyn},\ and\
  \citenamefont {Papi{\'{c}}}}]{Turner2018}%
  \BibitemOpen
  \bibfield  {author} {\bibinfo {author} {\bibfnamefont {C.~J.}\ \bibnamefont
  {Turner}}, \bibinfo {author} {\bibfnamefont {A.~A.}\ \bibnamefont
  {Michailidis}}, \bibinfo {author} {\bibfnamefont {D.~A.}\ \bibnamefont
  {Abanin}}, \bibinfo {author} {\bibfnamefont {M.}~\bibnamefont {Serbyn}}, \
  and\ \bibinfo {author} {\bibfnamefont {Z.}~\bibnamefont {Papi{\'{c}}}},\
  }\href {\doibase 10.1038/s41567-018-0137-5} {\bibfield  {journal} {\bibinfo
  {journal} {Nat. Phys.}\ }\textbf {\bibinfo {volume} {14}},\ \bibinfo {pages}
  {745} (\bibinfo {year} {2018}{\natexlab{a}})}\BibitemShut {NoStop}%
\bibitem [{\citenamefont {Moudgalya}\ \emph
  {et~al.}(2018{\natexlab{a}})\citenamefont {Moudgalya}, \citenamefont
  {Rachel}, \citenamefont {Bernevig},\ and\ \citenamefont
  {Regnault}}]{Moudgalya2018}%
  \BibitemOpen
  \bibfield  {author} {\bibinfo {author} {\bibfnamefont {S.}~\bibnamefont
  {Moudgalya}}, \bibinfo {author} {\bibfnamefont {S.}~\bibnamefont {Rachel}},
  \bibinfo {author} {\bibfnamefont {B.~A.}\ \bibnamefont {Bernevig}}, \ and\
  \bibinfo {author} {\bibfnamefont {N.}~\bibnamefont {Regnault}},\ }\href
  {\doibase 10.1103/physrevb.98.235155} {\bibfield  {journal} {\bibinfo
  {journal} {Phys. Rev. B}\ }\textbf {\bibinfo {volume} {98}} (\bibinfo {year}
  {2018}{\natexlab{a}}),\ 10.1103/physrevb.98.235155}\BibitemShut {NoStop}%
\bibitem [{\citenamefont {Turner}\ \emph
  {et~al.}(2018{\natexlab{b}})\citenamefont {Turner}, \citenamefont
  {Michailidis}, \citenamefont {Abanin}, \citenamefont {Serbyn},\ and\
  \citenamefont {Papi{\'{c}}}}]{Turner2018a}%
  \BibitemOpen
  \bibfield  {author} {\bibinfo {author} {\bibfnamefont {C.~J.}\ \bibnamefont
  {Turner}}, \bibinfo {author} {\bibfnamefont {A.~A.}\ \bibnamefont
  {Michailidis}}, \bibinfo {author} {\bibfnamefont {D.~A.}\ \bibnamefont
  {Abanin}}, \bibinfo {author} {\bibfnamefont {M.}~\bibnamefont {Serbyn}}, \
  and\ \bibinfo {author} {\bibfnamefont {Z.}~\bibnamefont {Papi{\'{c}}}},\
  }\href {\doibase 10.1103/physrevb.98.155134} {\bibfield  {journal} {\bibinfo
  {journal} {Phys. Rev. B}\ }\textbf {\bibinfo {volume} {98}} (\bibinfo {year}
  {2018}{\natexlab{b}}),\ 10.1103/physrevb.98.155134}\BibitemShut {NoStop}%
\bibitem [{\citenamefont {Moudgalya}\ \emph
  {et~al.}(2018{\natexlab{b}})\citenamefont {Moudgalya}, \citenamefont
  {Regnault},\ and\ \citenamefont {Bernevig}}]{Moudgalya2018a}%
  \BibitemOpen
  \bibfield  {author} {\bibinfo {author} {\bibfnamefont {S.}~\bibnamefont
  {Moudgalya}}, \bibinfo {author} {\bibfnamefont {N.}~\bibnamefont {Regnault}},
  \ and\ \bibinfo {author} {\bibfnamefont {B.~A.}\ \bibnamefont {Bernevig}},\
  }\href {\doibase 10.1103/physrevb.98.235156} {\bibfield  {journal} {\bibinfo
  {journal} {Phys. Rev. B}\ }\textbf {\bibinfo {volume} {98}} (\bibinfo {year}
  {2018}{\natexlab{b}}),\ 10.1103/physrevb.98.235156}\BibitemShut {NoStop}%
\bibitem [{\citenamefont {Choi}\ \emph {et~al.}(2018)\citenamefont {Choi},
  \citenamefont {Turner}, \citenamefont {Pichler}, \citenamefont {Ho},
  \citenamefont {Michailidis}, \citenamefont {Papić}, \citenamefont {Serbyn},
  \citenamefont {Lukin},\ and\ \citenamefont {Abanin}}]{Choi2018}%
  \BibitemOpen
  \bibfield  {author} {\bibinfo {author} {\bibfnamefont {S.}~\bibnamefont
  {Choi}}, \bibinfo {author} {\bibfnamefont {C.~J.}\ \bibnamefont {Turner}},
  \bibinfo {author} {\bibfnamefont {H.}~\bibnamefont {Pichler}}, \bibinfo
  {author} {\bibfnamefont {W.~W.}\ \bibnamefont {Ho}}, \bibinfo {author}
  {\bibfnamefont {A.~A.}\ \bibnamefont {Michailidis}}, \bibinfo {author}
  {\bibfnamefont {Z.}~\bibnamefont {Papić}}, \bibinfo {author} {\bibfnamefont
  {M.}~\bibnamefont {Serbyn}}, \bibinfo {author} {\bibfnamefont {M.~D.}\
  \bibnamefont {Lukin}}, \ and\ \bibinfo {author} {\bibfnamefont {D.~A.}\
  \bibnamefont {Abanin}},\ }\href@noop {} {\enquote {\bibinfo {title} {Emergent
  su(2) dynamics and perfect quantum many-body scars},}\ } (\bibinfo {year}
  {2018}),\ \Eprint {http://arxiv.org/abs/1812.05561v1} {1812.05561v1}
  \BibitemShut {NoStop}%
\bibitem [{\citenamefont {Calabrese}\ and\ \citenamefont
  {Cardy}(2006)}]{Calabrese2006}%
  \BibitemOpen
  \bibfield  {author} {\bibinfo {author} {\bibfnamefont {P.}~\bibnamefont
  {Calabrese}}\ and\ \bibinfo {author} {\bibfnamefont {J.}~\bibnamefont
  {Cardy}},\ }\href {\doibase 10.1103/PhysRevLett.96.136801} {\bibfield
  {journal} {\bibinfo  {journal} {Phys. Rev. Lett.}\ }\textbf {\bibinfo
  {volume} {96}},\ \bibinfo {pages} {136801} (\bibinfo {year}
  {2006})}\BibitemShut {NoStop}%
\bibitem [{\citenamefont {Calabrese}\ and\ \citenamefont
  {Caux}(2007)}]{Calabrese2007}%
  \BibitemOpen
  \bibfield  {author} {\bibinfo {author} {\bibfnamefont {P.}~\bibnamefont
  {Calabrese}}\ and\ \bibinfo {author} {\bibfnamefont {J.-S.}\ \bibnamefont
  {Caux}},\ }\href {\doibase 10.1088/1742-5468/2007/08/P08032} {\bibfield
  {journal} {\bibinfo  {journal} {J. Stat. Mech.}\ }\textbf {\bibinfo {volume}
  {2007}},\ \bibinfo {pages} {P08032} (\bibinfo {year} {2007})}\BibitemShut
  {NoStop}%
\bibitem [{\citenamefont {Eisler}\ and\ \citenamefont
  {Peschel}(2007)}]{Eisler2007}%
  \BibitemOpen
  \bibfield  {author} {\bibinfo {author} {\bibfnamefont {V.}~\bibnamefont
  {Eisler}}\ and\ \bibinfo {author} {\bibfnamefont {I.}~\bibnamefont
  {Peschel}},\ }\href {\doibase 10.1088/1742-5468/2007/06/P06005} {\bibfield
  {journal} {\bibinfo  {journal} {J. Stat. Mech: Theory Exp.}\ }\textbf
  {\bibinfo {volume} {2007}},\ \bibinfo {pages} {P06005} (\bibinfo {year}
  {2007})}\BibitemShut {NoStop}%
\bibitem [{\citenamefont {Rigol}\ \emph {et~al.}(2007)\citenamefont {Rigol},
  \citenamefont {Dunjko}, \citenamefont {Yurovsky},\ and\ \citenamefont
  {Olshanii}}]{Rigol2007}%
  \BibitemOpen
  \bibfield  {author} {\bibinfo {author} {\bibfnamefont {M.}~\bibnamefont
  {Rigol}}, \bibinfo {author} {\bibfnamefont {V.}~\bibnamefont {Dunjko}},
  \bibinfo {author} {\bibfnamefont {V.}~\bibnamefont {Yurovsky}}, \ and\
  \bibinfo {author} {\bibfnamefont {M.}~\bibnamefont {Olshanii}},\ }\href
  {\doibase 10.1103/PhysRevLett.98.050405} {\bibfield  {journal} {\bibinfo
  {journal} {Phys. Rev. Lett.}\ }\textbf {\bibinfo {volume} {98}},\ \bibinfo
  {pages} {050405} (\bibinfo {year} {2007})}\BibitemShut {NoStop}%
\bibitem [{\citenamefont {Cramer}\ \emph
  {et~al.}(2008{\natexlab{a}})\citenamefont {Cramer}, \citenamefont {Dawson},
  \citenamefont {Eisert},\ and\ \citenamefont {Osborne}}]{Cramer2008}%
  \BibitemOpen
  \bibfield  {author} {\bibinfo {author} {\bibfnamefont {M.}~\bibnamefont
  {Cramer}}, \bibinfo {author} {\bibfnamefont {C.~M.}\ \bibnamefont {Dawson}},
  \bibinfo {author} {\bibfnamefont {J.}~\bibnamefont {Eisert}}, \ and\ \bibinfo
  {author} {\bibfnamefont {T.~J.}\ \bibnamefont {Osborne}},\ }\href {\doibase
  10.1103/PhysRevLett.100.030602} {\bibfield  {journal} {\bibinfo  {journal}
  {Phys. Rev. Lett.}\ }\textbf {\bibinfo {volume} {100}},\ \bibinfo {pages}
  {030602} (\bibinfo {year} {2008}{\natexlab{a}})}\BibitemShut {NoStop}%
\bibitem [{\citenamefont {Cramer}\ \emph
  {et~al.}(2008{\natexlab{b}})\citenamefont {Cramer}, \citenamefont {Flesch},
  \citenamefont {McCulloch}, \citenamefont {Schollw{\"o}ck},\ and\
  \citenamefont {Eisert}}]{Cramer2008a}%
  \BibitemOpen
  \bibfield  {author} {\bibinfo {author} {\bibfnamefont {M.}~\bibnamefont
  {Cramer}}, \bibinfo {author} {\bibfnamefont {A.}~\bibnamefont {Flesch}},
  \bibinfo {author} {\bibfnamefont {I.~P.}\ \bibnamefont {McCulloch}}, \bibinfo
  {author} {\bibfnamefont {U.}~\bibnamefont {Schollw{\"o}ck}}, \ and\ \bibinfo
  {author} {\bibfnamefont {J.}~\bibnamefont {Eisert}},\ }\href {\doibase
  10.1103/PhysRevLett.101.063001} {\bibfield  {journal} {\bibinfo  {journal}
  {Phys. Rev. Lett.}\ }\textbf {\bibinfo {volume} {101}},\ \bibinfo {pages}
  {063001} (\bibinfo {year} {2008}{\natexlab{b}})}\BibitemShut {NoStop}%
\bibitem [{\citenamefont {Barthel}\ and\ \citenamefont
  {Schollw{\"o}ck}(2008)}]{Barthel2008}%
  \BibitemOpen
  \bibfield  {author} {\bibinfo {author} {\bibfnamefont {T.}~\bibnamefont
  {Barthel}}\ and\ \bibinfo {author} {\bibfnamefont {U.}~\bibnamefont
  {Schollw{\"o}ck}},\ }\href {\doibase 10.1103/PhysRevLett.100.100601}
  {\bibfield  {journal} {\bibinfo  {journal} {Phys. Rev. Lett.}\ }\textbf
  {\bibinfo {volume} {100}},\ \bibinfo {pages} {100601} (\bibinfo {year}
  {2008})}\BibitemShut {NoStop}%
\bibitem [{\citenamefont {Flesch}\ \emph {et~al.}(2008)\citenamefont {Flesch},
  \citenamefont {Cramer}, \citenamefont {McCulloch}, \citenamefont
  {Schollwoeck},\ and\ \citenamefont {Eisert}}]{Flesch2008}%
  \BibitemOpen
  \bibfield  {author} {\bibinfo {author} {\bibfnamefont {A.}~\bibnamefont
  {Flesch}}, \bibinfo {author} {\bibfnamefont {M.}~\bibnamefont {Cramer}},
  \bibinfo {author} {\bibfnamefont {I.~P.}\ \bibnamefont {McCulloch}}, \bibinfo
  {author} {\bibfnamefont {U.}~\bibnamefont {Schollwoeck}}, \ and\ \bibinfo
  {author} {\bibfnamefont {J.}~\bibnamefont {Eisert}},\ }\href {\doibase
  10.1103/PhysRevA.78.033608} {\bibfield  {journal} {\bibinfo  {journal} {Phys.
  Rev. A}\ }\textbf {\bibinfo {volume} {78}},\ \bibinfo {pages} {033608}
  (\bibinfo {year} {2008})}\BibitemShut {NoStop}%
\bibitem [{\citenamefont {Calabrese}\ \emph {et~al.}(2011)\citenamefont
  {Calabrese}, \citenamefont {Essler},\ and\ \citenamefont
  {Fagotti}}]{Calabrese2011}%
  \BibitemOpen
  \bibfield  {author} {\bibinfo {author} {\bibfnamefont {P.}~\bibnamefont
  {Calabrese}}, \bibinfo {author} {\bibfnamefont {F.~H.~L.}\ \bibnamefont
  {Essler}}, \ and\ \bibinfo {author} {\bibfnamefont {M.}~\bibnamefont
  {Fagotti}},\ }\href {\doibase 10.1103/PhysRevLett.106.227203} {\bibfield
  {journal} {\bibinfo  {journal} {Phys. Rev. Lett.}\ }\textbf {\bibinfo
  {volume} {106}},\ \bibinfo {pages} {227203} (\bibinfo {year}
  {2011})}\BibitemShut {NoStop}%
\bibitem [{\citenamefont {Caux}\ and\ \citenamefont {Essler}(2013)}]{Caux2013}%
  \BibitemOpen
  \bibfield  {author} {\bibinfo {author} {\bibfnamefont {J.-S.}\ \bibnamefont
  {Caux}}\ and\ \bibinfo {author} {\bibfnamefont {F.~H.~L.}\ \bibnamefont
  {Essler}},\ }\href {\doibase 10.1103/PhysRevLett.110.257203} {\bibfield
  {journal} {\bibinfo  {journal} {Phys. Rev. Lett.}\ }\textbf {\bibinfo
  {volume} {110}},\ \bibinfo {pages} {257203} (\bibinfo {year}
  {2013})}\BibitemShut {NoStop}%
\bibitem [{\citenamefont {Goldstein}\ \emph {et~al.}(2013)\citenamefont
  {Goldstein}, \citenamefont {Hara},\ and\ \citenamefont
  {Tasaki}}]{Goldstein2013}%
  \BibitemOpen
  \bibfield  {author} {\bibinfo {author} {\bibfnamefont {S.}~\bibnamefont
  {Goldstein}}, \bibinfo {author} {\bibfnamefont {T.}~\bibnamefont {Hara}}, \
  and\ \bibinfo {author} {\bibfnamefont {H.}~\bibnamefont {Tasaki}},\ }\href
  {\doibase 10.1103/PhysRevLett.111.140401} {\bibfield  {journal} {\bibinfo
  {journal} {Phys. Rev. Lett.}\ }\textbf {\bibinfo {volume} {111}},\ \bibinfo
  {pages} {140401} (\bibinfo {year} {2013})}\BibitemShut {NoStop}%
\bibitem [{\citenamefont {Malabarba}\ \emph {et~al.}(2014)\citenamefont
  {Malabarba}, \citenamefont {Garc{\'{i}}a-Pintos}, \citenamefont {Linden},
  \citenamefont {Farrelly},\ and\ \citenamefont {Short}}]{Malabarba2014}%
  \BibitemOpen
  \bibfield  {author} {\bibinfo {author} {\bibfnamefont {A.~S.~L.}\
  \bibnamefont {Malabarba}}, \bibinfo {author} {\bibfnamefont {L.~P.}\
  \bibnamefont {Garc{\'{i}}a-Pintos}}, \bibinfo {author} {\bibfnamefont
  {N.}~\bibnamefont {Linden}}, \bibinfo {author} {\bibfnamefont {T.~C.}\
  \bibnamefont {Farrelly}}, \ and\ \bibinfo {author} {\bibfnamefont {A.~J.}\
  \bibnamefont {Short}},\ }\href {\doibase 10.1103/PhysRevE.90.012121}
  {\bibfield  {journal} {\bibinfo  {journal} {Phys. Rev. E}\ }\textbf {\bibinfo
  {volume} {90}},\ \bibinfo {pages} {012121} (\bibinfo {year} {2014})},\
  \Eprint {http://arxiv.org/abs/1402.1093v1} {arXiv:1402.1093v1} \BibitemShut
  {NoStop}%
\bibitem [{\citenamefont {Garc{\'\i}a-Pintos}\ \emph
  {et~al.}(2017)\citenamefont {Garc{\'\i}a-Pintos}, \citenamefont {Linden},
  \citenamefont {Malabarba}, \citenamefont {Short},\ and\ \citenamefont
  {Winter}}]{Garcia-Pintos2015a}%
  \BibitemOpen
  \bibfield  {author} {\bibinfo {author} {\bibfnamefont {L.~P.}\ \bibnamefont
  {Garc{\'\i}a-Pintos}}, \bibinfo {author} {\bibfnamefont {N.}~\bibnamefont
  {Linden}}, \bibinfo {author} {\bibfnamefont {A.~S.}\ \bibnamefont
  {Malabarba}}, \bibinfo {author} {\bibfnamefont {A.~J.}\ \bibnamefont
  {Short}}, \ and\ \bibinfo {author} {\bibfnamefont {A.}~\bibnamefont
  {Winter}},\ }\href {\doibase 10.1103/PhysRevX.7.031027} {\bibfield  {journal}
  {\bibinfo  {journal} {Phys. Rev. X}\ }\textbf {\bibinfo {volume} {7}},\
  \bibinfo {pages} {031027} (\bibinfo {year} {2017})}\BibitemShut {NoStop}%
\bibitem [{\citenamefont {Goldstein}\ \emph {et~al.}(2015)\citenamefont
  {Goldstein}, \citenamefont {Hara},\ and\ \citenamefont
  {Tasaki}}]{Goldstein2015}%
  \BibitemOpen
  \bibfield  {author} {\bibinfo {author} {\bibfnamefont {S.}~\bibnamefont
  {Goldstein}}, \bibinfo {author} {\bibfnamefont {T.}~\bibnamefont {Hara}}, \
  and\ \bibinfo {author} {\bibfnamefont {H.}~\bibnamefont {Tasaki}},\ }\href
  {\doibase 10.1088/1367-2630/17/4/045002} {\bibfield  {journal} {\bibinfo
  {journal} {New J. Phys.}\ }\textbf {\bibinfo {volume} {17}},\ \bibinfo
  {pages} {045002} (\bibinfo {year} {2015})}\BibitemShut {NoStop}%
\bibitem [{\citenamefont {Farrelly}(2016)}]{Farrelly2016a}%
  \BibitemOpen
  \bibfield  {author} {\bibinfo {author} {\bibfnamefont {T.}~\bibnamefont
  {Farrelly}},\ }\href {\doibase 10.1088/1367-2630/18/7/073014} {\bibfield
  {journal} {\bibinfo  {journal} {New J. Phys.}\ }\textbf {\bibinfo {volume}
  {18}},\ \bibinfo {pages} {073014} (\bibinfo {year} {2016})}\BibitemShut
  {NoStop}%
\bibitem [{\citenamefont {Reimann}(2016)}]{Reimann2016}%
  \BibitemOpen
  \bibfield  {author} {\bibinfo {author} {\bibfnamefont {P.}~\bibnamefont
  {Reimann}},\ }\href {\doibase 10.1038/ncomms10821} {\bibfield  {journal}
  {\bibinfo  {journal} {Nature Comm.}\ }\textbf {\bibinfo {volume} {7}},\
  \bibinfo {pages} {10821} (\bibinfo {year} {2016})}\BibitemShut {NoStop}%
\bibitem [{\citenamefont {de~Oliveira}\ \emph {et~al.}(2018)\citenamefont
  {de~Oliveira}, \citenamefont {Charalambous}, \citenamefont {Jonathan},
  \citenamefont {Lewenstein},\ and\ \citenamefont {Riera}}]{DeOliveira2017}%
  \BibitemOpen
  \bibfield  {author} {\bibinfo {author} {\bibfnamefont {T.~R.}\ \bibnamefont
  {de~Oliveira}}, \bibinfo {author} {\bibfnamefont {C.}~\bibnamefont
  {Charalambous}}, \bibinfo {author} {\bibfnamefont {D.}~\bibnamefont
  {Jonathan}}, \bibinfo {author} {\bibfnamefont {M.}~\bibnamefont
  {Lewenstein}}, \ and\ \bibinfo {author} {\bibfnamefont {A.}~\bibnamefont
  {Riera}},\ }\href@noop {} {\bibfield  {journal} {\bibinfo  {journal} {New J.
  Phys.}\ }\textbf {\bibinfo {volume} {20}},\ \bibinfo {pages} {033032}
  (\bibinfo {year} {2018})},\ \Eprint {http://arxiv.org/abs/1704.06646}
  {arXiv:1704.06646} \BibitemShut {NoStop}%
\bibitem [{\citenamefont {Wilming}\ \emph {et~al.}(2017)\citenamefont
  {Wilming}, \citenamefont {Goihl}, \citenamefont {Krumnow},\ and\
  \citenamefont {Eisert}}]{Wilming2017}%
  \BibitemOpen
  \bibfield  {author} {\bibinfo {author} {\bibfnamefont {H.}~\bibnamefont
  {Wilming}}, \bibinfo {author} {\bibfnamefont {M.}~\bibnamefont {Goihl}},
  \bibinfo {author} {\bibfnamefont {C.}~\bibnamefont {Krumnow}}, \ and\
  \bibinfo {author} {\bibfnamefont {J.}~\bibnamefont {Eisert}},\ }\href
  {http://arxiv.org/abs/1704.06291} {\enquote {\bibinfo {title} {{Towards local
  equilibration in closed interacting quantum many-body systems}},}\ }
  (\bibinfo {year} {2017}),\ \Eprint {http://arxiv.org/abs/1704.06291}
  {1704.06291} \BibitemShut {NoStop}%
\bibitem [{\citenamefont {Uhlmann}(1976)}]{Uhlmann1976}%
  \BibitemOpen
  \bibfield  {author} {\bibinfo {author} {\bibfnamefont {A.}~\bibnamefont
  {Uhlmann}},\ }\href {\doibase 10.1016/0034-4877(76)90060-4} {\bibfield
  {journal} {\bibinfo  {journal} {Rep. Math. Phys.}\ }\textbf {\bibinfo
  {volume} {9}},\ \bibinfo {pages} {273} (\bibinfo {year} {1976})}\BibitemShut
  {NoStop}%
\bibitem [{\citenamefont {Nielsen}\ and\ \citenamefont
  {Chuang}(2000)}]{Nielsen2000}%
  \BibitemOpen
  \bibfield  {author} {\bibinfo {author} {\bibfnamefont {M.~A.}\ \bibnamefont
  {Nielsen}}\ and\ \bibinfo {author} {\bibfnamefont {I.~L.}\ \bibnamefont
  {Chuang}},\ }\href@noop {} {\emph {\bibinfo {title} {Quantum computation and
  quantum information}}}\ (\bibinfo  {publisher} {Cambridge University Press},\
  \bibinfo {year} {2000})\BibitemShut {NoStop}%
\bibitem [{\citenamefont {Wolf}\ \emph {et~al.}(2008)\citenamefont {Wolf},
  \citenamefont {Verstraete}, \citenamefont {Hastings},\ and\ \citenamefont
  {Cirac}}]{Wolf2008}%
  \BibitemOpen
  \bibfield  {author} {\bibinfo {author} {\bibfnamefont {M.~M.}\ \bibnamefont
  {Wolf}}, \bibinfo {author} {\bibfnamefont {F.}~\bibnamefont {Verstraete}},
  \bibinfo {author} {\bibfnamefont {M.~B.}\ \bibnamefont {Hastings}}, \ and\
  \bibinfo {author} {\bibfnamefont {J.~I.}\ \bibnamefont {Cirac}},\ }\href
  {\doibase 10.1103/PhysRevLett.100.070502} {\bibfield  {journal} {\bibinfo
  {journal} {Phys. Rev. Lett.}\ }\textbf {\bibinfo {volume} {100}},\ \bibinfo
  {pages} {070502} (\bibinfo {year} {2008})}\BibitemShut {NoStop}%
\bibitem [{\citenamefont {Fannes}\ \emph {et~al.}(1992)\citenamefont {Fannes},
  \citenamefont {Nachtergaele},\ and\ \citenamefont {Werner}}]{Fannes1992}%
  \BibitemOpen
  \bibfield  {author} {\bibinfo {author} {\bibfnamefont {M.}~\bibnamefont
  {Fannes}}, \bibinfo {author} {\bibfnamefont {B.}~\bibnamefont
  {Nachtergaele}}, \ and\ \bibinfo {author} {\bibfnamefont {R.~F.}\
  \bibnamefont {Werner}},\ }\href {\doibase 10.1007/bf02099178} {\bibfield
  {journal} {\bibinfo  {journal} {Commun. Math. Phys.}\ }\textbf {\bibinfo
  {volume} {144}},\ \bibinfo {pages} {443} (\bibinfo {year}
  {1992})}\BibitemShut {NoStop}%
\bibitem [{\citenamefont {Nachtergaele}(1996)}]{Nachtergaele1996}%
  \BibitemOpen
  \bibfield  {author} {\bibinfo {author} {\bibfnamefont {B.}~\bibnamefont
  {Nachtergaele}},\ }\href {\doibase 10.1007/bf02099509} {\bibfield  {journal}
  {\bibinfo  {journal} {Commun. Math. Phys.}\ }\textbf {\bibinfo {volume}
  {175}},\ \bibinfo {pages} {565} (\bibinfo {year} {1996})}\BibitemShut
  {NoStop}%
\bibitem [{\citenamefont {Perez-Garcia}\ \emph {et~al.}(2006)\citenamefont
  {Perez-Garcia}, \citenamefont {Verstraete}, \citenamefont {Wolf},\ and\
  \citenamefont {Cirac}}]{Perez-Garcia2006}%
  \BibitemOpen
  \bibfield  {author} {\bibinfo {author} {\bibfnamefont {D.}~\bibnamefont
  {Perez-Garcia}}, \bibinfo {author} {\bibfnamefont {F.}~\bibnamefont
  {Verstraete}}, \bibinfo {author} {\bibfnamefont {M.~M.}\ \bibnamefont
  {Wolf}}, \ and\ \bibinfo {author} {\bibfnamefont {J.~I.}\ \bibnamefont
  {Cirac}},\ }\href@noop {} {\bibfield  {journal} {\bibinfo  {journal} {Quant.
  Inf. Comp.}\ }\textbf {\bibinfo {volume} {5$\&$6}},\ \bibinfo {pages} {401}
  (\bibinfo {year} {2006})}\BibitemShut {NoStop}%
\bibitem [{\citenamefont {Perez-Garcia}\ \emph {et~al.}(2008)\citenamefont
  {Perez-Garcia}, \citenamefont {Verstraete}, \citenamefont {Wolf},\ and\
  \citenamefont {Cirac}}]{Perez-Garcia2007}%
  \BibitemOpen
  \bibfield  {author} {\bibinfo {author} {\bibfnamefont {D.}~\bibnamefont
  {Perez-Garcia}}, \bibinfo {author} {\bibfnamefont {F.}~\bibnamefont
  {Verstraete}}, \bibinfo {author} {\bibfnamefont {M.~M.}\ \bibnamefont
  {Wolf}}, \ and\ \bibinfo {author} {\bibfnamefont {J.~I.}\ \bibnamefont
  {Cirac}},\ }\href {http://dl.acm.org/citation.cfm?id=2016976.2016982}
  {\bibfield  {journal} {\bibinfo  {journal} {Quant. Info. Comput.}\ }\textbf
  {\bibinfo {volume} {8}},\ \bibinfo {pages} {650} (\bibinfo {year}
  {2008})}\BibitemShut {NoStop}%
\bibitem [{\citenamefont {Chen}\ \emph {et~al.}(2010)\citenamefont {Chen},
  \citenamefont {Gu},\ and\ \citenamefont {Wen}}]{PhysRevB.82.155138}%
  \BibitemOpen
  \bibfield  {author} {\bibinfo {author} {\bibfnamefont {X.}~\bibnamefont
  {Chen}}, \bibinfo {author} {\bibfnamefont {Z.-C.}\ \bibnamefont {Gu}}, \ and\
  \bibinfo {author} {\bibfnamefont {X.-G.}\ \bibnamefont {Wen}},\ }\href
  {\doibase 10.1103/PhysRevB.82.155138} {\bibfield  {journal} {\bibinfo
  {journal} {Phys. Rev. B}\ }\textbf {\bibinfo {volume} {82}},\ \bibinfo
  {pages} {155138} (\bibinfo {year} {2010})}\BibitemShut {NoStop}%
\bibitem [{\citenamefont {Van~Acoleyen}\ \emph {et~al.}(2013)\citenamefont
  {Van~Acoleyen}, \citenamefont {Mari\"en},\ and\ \citenamefont
  {Verstraete}}]{PhysRevLett.111.170501}%
  \BibitemOpen
  \bibfield  {author} {\bibinfo {author} {\bibfnamefont {K.}~\bibnamefont
  {Van~Acoleyen}}, \bibinfo {author} {\bibfnamefont {M.}~\bibnamefont
  {Mari\"en}}, \ and\ \bibinfo {author} {\bibfnamefont {F.}~\bibnamefont
  {Verstraete}},\ }\href {\doibase 10.1103/PhysRevLett.111.170501} {\bibfield
  {journal} {\bibinfo  {journal} {Phys. Rev. Lett.}\ }\textbf {\bibinfo
  {volume} {111}},\ \bibinfo {pages} {170501} (\bibinfo {year}
  {2013})}\BibitemShut {NoStop}%
\bibitem [{\citenamefont {Kliesch}\ \emph {et~al.}(2014)\citenamefont
  {Kliesch}, \citenamefont {Gogolin}, \citenamefont {Kastoryano}, \citenamefont
  {Riera},\ and\ \citenamefont {Eisert}}]{Kliesch2014}%
  \BibitemOpen
  \bibfield  {author} {\bibinfo {author} {\bibfnamefont {M.}~\bibnamefont
  {Kliesch}}, \bibinfo {author} {\bibfnamefont {C.}~\bibnamefont {Gogolin}},
  \bibinfo {author} {\bibfnamefont {M.~J.}\ \bibnamefont {Kastoryano}},
  \bibinfo {author} {\bibfnamefont {A.}~\bibnamefont {Riera}}, \ and\ \bibinfo
  {author} {\bibfnamefont {J.}~\bibnamefont {Eisert}},\ }\href@noop {}
  {\bibfield  {journal} {\bibinfo  {journal} {Phys. Rev. X}\ }\textbf {\bibinfo
  {volume} {4}},\ \bibinfo {pages} {031019} (\bibinfo {year}
  {2014})}\BibitemShut {NoStop}%
\end{thebibliography}
\end{document}